\numberwithin{equation}{section}
\newcommand\R{\mathbb{R}}
\newcommand\Q{\mathbb{Q}}
\DeclareMathOperator\p{\mathbb{P}}
\DeclareMathOperator\sign{\mathrm{sign}}
\newcommand\N{\mathbb{N}}
\newcommand\F{\mathcal{F}}
\newcommand\E{\mathbb{E}}
\renewcommand\i{\infty}
\newcommand\ve{\varepsilon}
\newcommand{\vp}{\varphi}
\newcommand{\cA}{\mathcal{A}}
\newcommand{\cB}{\mathcal{B}}
\newcommand{\cZ}{\mathcal{Z}}
\newcommand{\cC}{\mathcal{C}}
\newcommand{\cD}{\mathcal{D}}
\newcommand{\hvp}{\widehat\varphi}
\newcommand{\hY}{\widehat Y}
\newcommand{\hS}{\widehat S}
\renewcommand\div{\mathbin/}
\renewcommand\[{\begin{equation}}
\renewcommand\]{\end{equation}}
\newcommand\tsfrac[2]{\bgroup\textstyle\frac#1#2\egroup}
\newcommand{\sint}{\stackrel{\mbox{\tiny$\bullet$}}{}}
\newcommand{\Var}{\operatorname{Var}} 
\DeclarePairedDelimiter\abs{\lvert}{\rvert}
\let\brack\undefined
\DeclarePairedDelimiter\brack\lbrack\rbrack
\DeclarePairedDelimiter\paren{(}{)}
\DeclareMathOperator\Cov{Cov}
\newtheorem{theorem}{Theorem}[section]
\newtheorem{corollary}[theorem]{Corollary}
\newtheorem{definition}[theorem]{Definition}
\newtheorem{lemma}[theorem]{Lemma}
\newtheorem{proposition}[theorem]{Proposition}
\theoremstyle{definition}
\newtheorem{remark}[theorem]{Remark}
\newcommand{\Om}{\Omega}
\newcommand{\om}{\omega}
\newcommand{\bt}{\begin{thm}}
\newcommand{\et}{\end{thm}}
\newcommand{\br}{\begin{remark}}
\newcommand{\er}{\end{remark}}
\newcommand{\bl}{\begin{lemma}}
\newcommand{\el}{\end{lemma}}
\newcommand{\bp}{\begin{proof}}
\newcommand{\ep}{\end{proof}}
\newcommand{\bal}{\begin{align*}}
\newcommand{\eal}{\end{align*}}
\newcommand{\bi}{\begin{itemize}}
\newcommand{\be}{\begin{equation}}
\newcommand{\ee}{\end{equation}}
\newcommand{\bea}{\begin{eqnarray}}
\newcommand{\eea}{\end{eqnarray}}
\newcommand{\ba}{\begin{align*}}
\newcommand{\ea}{\end{align*}}
\newcommand{\ei}{\end{itemize}}
\newcommand{\vt}{\vartheta}
\newcommand{\hvt}{\widehat\vartheta}
\newcommand{\hg}{\widehat g}
\newcommand{\Lim}{\lim\limits}
\newcommand{\Inf}{\inf\limits}
\begin{document}

\title{Shadow prices, fractional Brownian motion, and portfolio optimisation under transaction costs}
\author{Christoph Czichowsky\footnote{Department of Mathematics, London School of Economics and Political Science, Columbia House, Houghton Street, London WC2A 2AE, UK, {\tt c.czichowsky@lse.ac.uk}.}
        \and R\'emi Peyre\footnote{CNRS \& Institut \'Elie Cartan de Lorraine, Campus Aiguillettes, 54506 Vand\oe uvre-l\`es-Nancy \textsc{cedex}, France, \texttt{remi.peyre@univ-lorraine.fr}.}
 \and Walter Schachermayer\footnote{Fakult\"at f\"ur Mathematik, Universit\"at Wien, Oskar-Morgenstern-Platz 1, A-1090 Wien, Austria, {\tt walter.schachermayer@univie.ac.at} and the Institute for Theoretical Studies, ETH Zurich. Partially supported by the Austrian Science Fund (FWF) under grant P25815, the Vienna Science and Technology Fund (WWTF) under grant MA09-003 and Dr. Max R\"ossler, the Walter Haefner Foundation and the ETH Zurich Foundation.}
\and Junjian Yang\footnote{Fakult\"at f\"ur Mathematik, Universit\"at Wien, Oskar-Morgenstern-Platz 1, A-1090 Wien, Austria, {\tt junjian.yang@univie.ac.at}. Financial support by the Austrian Science Fund (FWF) under grant P25815 is gratefully acknowledged.}}

\date{\today}
\maketitle

\begin{abstract}\noindent 
We continue the analysis of our previous paper \cite{CSY15} pertaining to the existence of a shadow price process for portfolio optimisation under proportional transaction costs. There, we established a positive answer for a continuous price process $S=(S_t)_{0\leq t\leq T}$ satisfying the condition $(NUPBR)$ of ``no unbounded profit with bounded risk''. This condition requires that $S$ is a semimartingale and therefore is too restrictive for applications to models driven by fractional Brownian motion. In the present paper, we derive the same conclusion under the weaker condition $(TWC)$ of ``two way crossing'', which does not require $S$ to be a semimartingale. 
Using a recent result of R.~Peyre, this allows us to show the existence of a shadow price for exponential fractional Brownian motion and \emph{all} utility functions defined  on the positive half-line having reasonable asymptotic elasticity. Prime examples of such utilities are logarithmic or power utility. 
\end{abstract}

\noindent
\textbf{MSC 2010 Subject Classification:} 91G10, 93E20, 60G48 \newline
\vspace{-0.2cm}\newline
\noindent
\textbf{JEL Classification Codes:} G11, C61\newline
\vspace{-0.2cm}\newline
\noindent
\textbf{Key words:} utility maximisation, proportional transaction costs, non-semimartingle price process, fractional Brownian motion, shadow prices, simple arbitrage, two way crossing, convex duality, logarithmic utility.  

\section{Introduction}
In mathematical finance, one classically works with so-called \emph{frictionless financial markets}, where at each time $t$ arbitrary amounts of stock can be bought and sold at the same price $S_t$. Here, the mathematical structure of utility maximisation essentially implies that an optimal trading strategy only exists, if the discounted price processes $S=(S_t)_{0\leq t\leq T}$ of the underlying financial instruments are so-called \emph{semimartingales}, that is, stochastic processes which are ``good integrators'' (see \cite{AI05,LZ08,KP08}). The latter is also related to absence of arbitrage opportunities in frictionless markets either in the form of ``no free lunch with vanishing risk'' $(NFLVR)$ (see Theorem 7.2 of \cite{DS94}) or its local version of ``no unbounded profit with bounded risk'' $(NUPBR)$ (see Theorem 2.3 of \cite{KP08}) and explains why most of the literature assumes that discounted prices are semimartingales.

While the semimartingale property allows to employ the powerful tools of It\^o calculus to obtain optimal trading strategies in frictionless financial markets, it rules out non-semimartingale models based on fractional Brownian motion. These models have been proposed by Mandelbrot~\cite{M63} about fifty years ago. Their fractional scaling and related statistical properties distinguish them as a natural class of price processes beyond the traditional semimartingale setup.

For frictionless trading, Rogers~\cite{R97} and Cheridito~\cite{C03} show how to exploit the non-semimartingality of fractional models like the \emph{fractional Black-Scholes model} 
  $$S_t=\exp(\mu t + \sigma B^H_t),$$ 
where $\mu\in\R$, $\sigma>0$ and $B^H=(B^H_t)$ is a fractional Brownian motion with Hurst parameter $H\in(0,1)\setminus\frac{1}{2}$, to explicitly construct ``arbitrage opportunities''. In general, this assertion follows from the fact that for locally bounded, c\`adl\`ag, adapted processes ``no free lunch with vanishing risk from simple trading strategies'' implies the semimartingale property (see Theorem 7.2 of \cite{DS94}). While the fractional models provide arbitrage opportunities for frictionless trading, Guasoni \cite{G06} proves that they are arbitrage-free as soon as \emph{proportional transaction costs} are taken into account. Conceptually, this allows to use these models as price processes for portfolio optimisation under transaction costs, as illustrated by Guasoni \cite{G02}. He shows that optimal trading strategies exist for non-semimartingale models under transaction costs, if they are arbitrage free and the indirect utility is finite. 

In this paper, we give a quite satisfactory answer to the existence of a so-called \emph{shadow price} for portfolio optimisation under transaction costs in the fractional Black-Scholes model. This is a semimartingale price process $\hS=(\hS_t)_{0\leq t\leq T}$ taking values in the bid-ask spread such that frictionless trading for that price process leads to the same optimal trading strategy and utility as in the original problem under transaction costs. We show that a shadow price exists for the fractional Black-Scholes model for arbitrary utility functions $U:(0,\infty)\to\R$  on the positive half-line satisfying the condition of reasonable asymptotic elasticity.

For utility functions $U:(0,\infty)\to\R$, we established in \cite{CSY15} the existence of a shadow price under the assumption that $S$ is continuous and satisfies the condition $(NUBPR)$ without transaction costs. The assumption of $(NUBPR)$ requires that the price process $S$ has to be a semimartingale. It therefore rules out applications to models driven by fractional Brownian motion. In addition, in Proposition 4.1 of \cite{CSY15} we constructed an example of a non-decreasing, continuous, sticky price process such that the optimal trading strategy for the problem of maximising logarithmic utility under transaction costs exists, but there is no shadow price. While the stickiness condition is sufficient to guarantee the existence of a shadow price for continuous price processes and utility functions $U:\R\to\R$ on the whole real line that are bounded from above such as exponential utility (see \cite{CS15}), this assumption is not sufficient for utility functions $U:(0,\infty)\to\R$ on the positive half-line. A closer look at the example reveals that the reason for the non-existence of a shadow price is that the optimal trading strategy holds the maximal admissible leverage. This behaviour can only be optimal because the continuous price process can cross any level only in an upwards direction. 

To ensure the existence of a shadow price, it is sufficient to exclude that the optimal trading strategy takes the maximal leverage. This is done in our first main result (Theorem \ref{t1}) by imposing the condition of ``two way crossing'' $(TWC)$ (Definition \ref{def:twc}). Loosely speaking, this condition requires that, whenever the price process crosses a given level in an upwards direction, it also immediately crosses it in a downward direction and vice versa. The condition $(TWC)$ is, in particular, satisfied by continuous martingales. It has been introduced by Bender \cite{Ben12} in the analysis of ``no simple arbitrage'' (without transaction costs), that is, absence of arbitrage with linear combinations of buy-and-hold strategies. The significance of the condition $(TWC)$ in Theorem \ref{t1} is that it holds in the fractional Black-Scholes model because of the fact that fractional Brownian motion satisfies a law of iterated logarithm at stopping times by a recent result of Peyre \cite{Pey15}. This gives the existence of shadow prices for the fractional Black-Scholes model and utility functions that are bounded from above. To extend this to utility functions that are unbounded from above like logarithmic utility $U(x)=\log(x)$, we need to ensure that the problem is well posed and therefore we have to establish that the indirect utility is finite. Since fractional Brownian motion is neither a Markov process nor a semimartingale, we need different tools than in the frictionless setting in order to achieve this result. Here, we use that in the presence of transaction costs any trading can only be profitable, if there is a sufficient price movement. Exploiting estimates on Gaussian processes, we can bound the expected gains from trading by establishing exponential and Gaussian moments of the fluctuations of fractional Brownian motion of size $\delta>0$ and therefore obtain the finiteness of indirect utility for \emph{any} utility function $U:(0,\infty)\to\R$. This allows us to deduce the existence of a shadow price for the fractional Black-Scholes model and arbitrary utility function $U:(0,\infty)\to\R$ satisfying the condition of reasonable asymptotic elasticity, which is our second main contribution (Theorem \ref{t2}) and a fairly complete answer to this question.

Because of the connection to frictionless financial markets, we can exploit tools from It\^o calculus and known results from portfolio optimisation in frictionless markets under transaction costs by applying them to the shadow price $\hS=(\hS_t)_{0\leq t\leq T}$. From this, we obtain for the fractional Black-Scholes model that the shadow price $\hS$ is given by an It\^o process
\begin{align}\label{int:ito}
d \widehat{S}_t = \widehat{S}_t (\widehat{\mu}_t dt + \widehat{\sigma}_t dW_t), \quad 0 \leq t \leq T,
\end{align}
where $\widehat{\mu}=(\widehat{\mu}_t)_{0 \leq t \leq T}$ and $\widehat{\sigma}=(\widehat{\sigma}_t)_{0 \leq t \leq T}$ are predictable processes such that the solution to \eqref{int:ito} is well-defined in the sense of It\^o integration.

The importance of the existence of a shadow price for logarithmic utility is that the optimal trading strategy to the frictionless problem is \emph{myopic}. That is, it consists of holding a fraction of wealth in the risky asset that is given by the local mean-variance tradeoff of the returns
$$\widehat{\pi}_t=\frac{\widehat{\mu}_t}{\widehat{\sigma}^2_t}, \quad 0\leq t\leq T.$$
By definition of the shadow price, the optimal trading strategy to the frictionless problem for $\hS$ coincides with that to the problem for the original price process $S$ under transaction costs. This implies that 
$$\widehat{\pi}_t=\frac{\widehat{\mu}_t}{\widehat{\sigma}^2_t}=\frac{\hvp^1_{t-}\hS_t}{\hvp^0_{t-}+\hvp^1_{t-}\hS_t},\qquad 0\leq t\leq T,$$
for the optimal trading strategy $\hvp=(\hvp^0_t,\hvp^1_t)_{0_-\leq t\leq T}$ under transaction costs (the notation is taken from \cite{CSY15} and will be recalled later). Therefore, the optimal trading strategy under transaction costs is directly linked to the coefficients $\widehat{\mu}=(\widehat{\mu}_t)_{0 \leq t \leq T}$ and $\widehat{\sigma} = (\widehat{\sigma}_t)_{0 \leq t \leq T}$ of the shadow price process \eqref{int:ito}. We expect that analysing the coefficients $\widehat{\mu}=(\widehat{\mu}_t)_{0 \leq t \leq T}$ and $\widehat{\sigma} = (\widehat{\sigma}_t)_{0 \leq t \leq T}$ should also allow to characterise the optimal trading behaviour under transaction costs in the fractional Black-Scholes model more explicitly similarly as in \cite{GMKS13} for the classical Black-Scholes model. A thorough investigation of this is left to future research.

It is well known that the existence of a shadow price is related to the solution of a suitable dual problem; see \cite{KMK11,CMKS14,CS14,CSY15,CS15}. Under transaction costs, this duality goes back to the seminal work \cite{CK96} of Cvitani\'c and Karatzas and has been subsequently extended to dynamic duality results \cite{CK96,CW01,CS14,CSY15,CS15} for utility functions on the positive half-line as well as static duality results \cite{DPT01,B02,BM03,CO11,BY13} for (possibly) multi-variate utility functions.

To apply this duality in our setup, we need to ensure the existence of so-called \emph{$\lambda$-consistent local martingale deflators}. These processes are used as dual variables similarly as equivalent martingale measures \cite{KLS87,HP91,KLSX91,KS99} and local martingale deflators in the frictionless theory \cite{KK07}. For this, we provide a local version of the fundamental theorem of asset pricing for continuous processes under small transaction costs of \cite{GRS10}. It establishes that, for a continuous price process, the existence of a $\lambda$-consistent local martingale deflator for any size of transaction costs $\lambda\in(0,1)$ is equivalent to having the condition (NOIA) of ``no obvious immediate arbitrage'' (see Definition \ref{def:NOIA}) locally. 

The remainder of the article is organised as follows. We formulate the problem and state our main results in Section \ref{sec:mr}. Their proofs are given in Section \ref{sec:pr}. Section \ref{sec:dt} recalls duality results and provides the local version of the fundamental theorem of asset pricing. In Section \ref{sec:fl}, we establish the exponential and Gaussian moments of the fluctuations of fractional Brownian motion.

\section{Main results}\label{sec:mr}
We consider a financial market consisting of one riskless bond and one risky stock. The riskless asset is assumed to be normalised to one. Trading the risky asset incurs proportional transaction costs $\lambda \in (0,1)$. This means that one has to pay a (higher) ask price $S_t$ when buying risky shares but only receives a lower bid price $(1-\lambda)S_t$ when selling them. Here, \mbox{$S=(S_t)_{0\leq t\leq T}$} denotes a strictly positive, adapted, \emph{continuous} stochastic process defined on some underlying filtered probability space $\big(\Om,\F,(\F_t)_{0\leq t\leq T},\p\big)$ with fixed finite time horizon $T\in(0,\infty)$ satisfying the usual assumptions of right-continuity and completeness. As usual equalities and inequalities between random variables hold up to $\p$-nullsets and between stochastic processes up to $\p$-evanescent sets.

\emph{Trading strategies} are modelled by $\R^2$-valued, c\`adl\`ag and adapted processes $\vp=(\vp^0_t,\vp^1_t)_{0_-\leq t\leq T}$ of finite variation indexed by $[0_-,T]:=\{0_-\}\cup[0,T]$, where $\vp^0_{t}$ and $\vp^1_{t}$ describe the holdings in the riskless and the risky asset, respectively, after rebalancing the portfolio at time $t$. As explained in \cite{CSY15} in more detail, using $[0_-,T]$ instead of $[0,T]$ as index set allows us to use c\`adl\`ag trading strategies. For any process $\psi=(\psi_t)_{0_-\leq t\leq T}$ of finite variation, we denote by $\psi=\psi_{0_-}+\psi^{\uparrow}-\psi^{\downarrow}$ its Jordan-Hahn decomposition into two non-decreasing processes $\psi^{\uparrow}$ and $\psi^{\downarrow}$ starting at zero. 

A trading strategy $\vp=(\vp^0_t,\vp^1_t)_{0_-\leq t\leq T}$ is called \emph{self-financing}, if 
\begin{equation}\label{eq:sf}
\int_s^td\varphi^0_u\leq-\int_s^tS_ud\varphi^{1,\uparrow}_u +\int_s^t (1-\lambda)S_ud\varphi^{1,\downarrow}_u, \quad 0_- \leq s\leq  t \leq T,
\end{equation}
where the integrals can be defined pathwise as a Riemann-Stieltjes integrals.

A self-financing strategy $\vp=(\vp^0,\vp^1)$ is called \emph{admissible}, if its \emph{liquidation value} satisfies
\be
V^{liq}_t(\vp):=\vp^0_t+(\vp^1_t)^+(1-\lambda)S_t-(\vp^1_t)^-S_t\geq 0, \quad 0 \leq t \leq T.\label{liq}
\ee

For $x>0$, we denote by $\cA(x)$ the set of all self-financing and admissible trading strategies under transaction costs $\lambda$ starting from initial endowment $(\vp^0_{0_-},\vp^1_{0_-})=(x,0)$ and
$$\cC(x):=\big\{V^{liq}_T(\vp)~\big|~\vp=(\vp^0,\vp^1)\in\cA(x)\big\}.$$

We consider an economic agent whose goal is to maximise her expected utility from terminal wealth
\begin{equation} \label{P1}
\E[U(g)] \to \max!, \quad g \in \cC(x).
\end{equation}
Here, $U:(0,\infty)\to \R$ denotes an increasing, strictly concave, continuously differentiable utility function satisfying the Inada conditions
\be
\text{$U'(0):=\lim_{x\searrow 0}U'(x) =\infty$ \quad and \quad $U'(\infty):=\lim_{x\nearrow \infty}U'(x)=0$.}\label{Inada}
\ee

In this paper, we continue our analysis of problem \eqref{P1} by using the concept of a shadow price.\begin{definition}\label{def:sp}
A semimartingale price process $\widehat{S}=(\widehat{S}_t)_{0 \leq t \leq T}$ is called a \emph{shadow price process}, if 
\bi
\item[\bf{1)}] $\widehat{S}$ is valued in the bid-ask spread $[(1-\lambda) S, S]$ 
\item[\bf{2)}] The solution $\hvt=(\hvt_t)_{0 \leq t \leq T}$ to the frictionless utility maximisation problem
\begin{equation}\label{P3}
\E\big[U\big(x + \vt \sint \widehat{S}_T\big)\big]\to \max!, \quad \vt \in \mathcal{A}(x;\hS),
\end{equation}
exists  (in the sense of \cite{KS99}), where $\mathcal{A}(x;\hS)$ denotes the set of all self-financing and admissible trading strategies $\vt=(\vt_t)_{0 \leq t \leq T}$ for $\widehat{S}$ without transaction costs. That is, $\widehat{S}$-integrable (in the sense of Itô), predictable processes $\vt=(\vt_t)_{0 \leq t \leq T}$ such that $X_t=x+\vt\sint \hS_t\geq0$ for all $0\leq t\leq T$.
\item[\bf{3)}] The optimal trading strategy $\hvt=(\hvt_t)_{0 \leq t \leq T}$ to the frictionless problem \eqref{P3} coincides with (the left limit of) the holdings in stock $\hvp^1_{-}=(\hvp^1_{t-})_{0 \leq t \leq T}$ of the optimal trading strategy to the utility maximisation problem \eqref{P1} under transaction costs so that $x+\hvt\sint\hS_T=V^{liq}_T(\hvp)=\hg(x)$.
\end{itemize}
\end{definition}

In Theorem 3.2 of \cite{CSY15}, we established the existence of a shadow price for a continuous price process $S=(S_t)_{0\leq t\leq T}$ satisfying the condition $(NUPBR)$ of ``no unbounded profit with bounded risk'' (without transaction costs). The assumption of $(NUPBR)$ implies that $S$ has to be a semimartingale. Therefore, our result does not yet apply to price processes driven by fractional Brownian motion $B^H=(B^H_t)_{0\leq t\leq T}$ such as the \emph{fractional Black-Scholes model}
 \begin{equation}  \label{fBS}
   S_t = \exp\big(\mu t + \sigma B^H_t\big), \qquad 0\leq t\leq T, 
 \end{equation}
 where $\mu\in\R$, $\sigma>0$ and $H\in(0,1)\backslash\{\frac{1}{2}\}$ denotes the Hurst parameter of the fractional Brownian motion $B^H$. In the present article, we combine a recent result of Peyre \cite{Pey15} with a strengthening of our existence result in Theorem 3.2 of \cite{CSY15} to fill this gap.
 
 For this, we need a weaker no arbitrage type condition than $(NUPBR)$ that is nevertheless in some sense stronger than the stickiness. It turns out that the condition $(TWC)$ of ``two way crossing'' is the suitable one to work with.
\begin{definition}\label{def:twc}
 Let $X=(X_t)_{0\leq t\leq T}$ be a real-valued continuous stochastic process and $\sigma$ a finite stopping time. 
Set
  \begin{equation*} 
   \begin{aligned}
    \sigma_+&:=\inf\{t>\sigma \,|\,X_t-X_\sigma >0\}, \\
    \sigma_-&:=\inf\{t>\sigma \,|\,X_t-X_\sigma <0\}.    
   \end{aligned}
  \end{equation*}
Then, we say that $X$ satisfies the condition $(TWC)$ of ``two way crossing'', if $\sigma_+ = \sigma_-$ $\p$-a.s. for any finite stopping time $\sigma$.
\end{definition}
The two way crossing condition was introduced by Bender in \cite{Ben12} in the analysis of the condition of ``no simple arbitrage'' (without transaction costs), that is, no arbitrage with linear combinations of buy and hold strategies. Using it in the context of portfolio optimisation under transaction costs allows us to establish the following results. For better readability, their proofs are deferred to Section \ref{sec:pr}.

\begin{theorem}\label{t1}
Fix transaction costs $\lambda\in(0,1)$ and a strictly positive continuous process $S=(S_t)_{0\leq t\leq T}$ satisfying $(TWC)$.
Let $U:(0,\infty)\to\R$ be a strictly concave, increasing, continuously differentiable utility function, satisfying the Inada conditions \eqref{Inada} and having reasonable asymptotic elasticity $AE(U):= \limsup_{x\to\infty}\frac{xU'(x)}{U(x)}<1$ and suppose that 
   \begin{equation}  \label{u(x)}
     u(x):=\sup_{\varphi\in\cA(x)}\E\big[U\big(V^{liq}_T(\varphi)\big)\big]<\infty
   \end{equation}
  for some $x>0$.
     
  Then, there exists an optimal trading strategy $\hvp=(\hvp^0_t,\hvp^1_t)_{0_-\leq t\leq T}$  for \eqref{P1} and a shadow price $\widehat{S}=(\widehat{S}_t)_{0\leq t\leq T}$.  
\end{theorem}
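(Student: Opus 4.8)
\emph{Proof proposal.} The plan is to run the convex-duality scheme that yielded Theorem~3.2 in \cite{CSY15}, replacing the two places where $(NUPBR)$ (hence the semimartingale property of $S$) was used by arguments invoking only $(TWC)$: the existence of a rich enough dual domain, and --- this is the heart of the matter --- the passage from the dual minimiser to a \emph{bona fide} shadow price, so as to avoid the pathology of \cite[Proposition 4.1]{CSY15}.

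\emph{Step 1 (no arbitrage and consistent deflators).} First I would check that $(TWC)$ rules out ``obvious immediate arbitrage'' on every stochastic interval. Indeed, if a buy-and-hold or sell-and-hold position held over some stochastic interval starting at a finite stopping time $\sigma$ produced, with positive probability, a gain that is a.s.\ strictly positive immediately after $\sigma$, then $S-S_\sigma$ would keep a constant sign on a right-neighbourhood of $\sigma$ on that set, forcing $\sigma_+>\sigma=\sigma_-$ (or the reverse), contradicting Definition~\ref{def:twc}. The local version of the fundamental theorem of asset pricing for continuous processes under small transaction costs proved in Section~\ref{sec:dt} (a localised form of \cite{GRS10}) then yields, for every $\lambda'\in(0,1)$, a $\lambda'$-consistent local martingale deflator; in particular the dual domain $\mathcal{Z}$ of $\lambda$-consistent supermartingale deflators is non-empty.

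\emph{Step 2 (duality and attainment).} Given $\mathcal{Z}\neq\emptyset$, $u(x)<\infty$ and $AE(U)<1$, I would invoke the dynamic duality theory for utility maximisation on the positive half-line under transaction costs from \cite{CSY15} (building on \cite{CK96,CS14}). This provides: finiteness of the dual value function $v$, the conjugacy relations between $u$ and $v$, a dual minimiser $\hat Z=(\hat Z^0,\hat Z^1)\in\mathcal{Z}$, and a primal optimiser $\hvp=(\hvp^0,\hvp^1)\in\cA(x)$ with $V^{liq}_T(\hvp)=\hg(x)$, together with the usual complementary-slackness relations ($\hat Z^0\hvp^0+\hat Z^1\hvp^1$ is a supermartingale which is a true martingale along the optimal strategy, and $\hvp$ trades only when $\hS:=\hat Z^1/\hat Z^0$ is at the corresponding endpoint of $[(1-\lambda)S,S]$). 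By construction $\hS$ takes values in $[(1-\lambda)S,S]$, which is property~1) of Definition~\ref{def:sp}.

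\emph{Step 3 ($(TWC)$ upgrades the candidate to a shadow price --- the main obstacle).} It remains to establish properties~2) and 3) of Definition~\ref{def:sp}: that the frictionless problem \eqref{P3} for $\hS$ is well posed and solved by $\hvt=\hvp^1_{-}$. The only obstruction is the scenario of \cite[Proposition 4.1]{CSY15}, in which $\hat Z^0$ is a strict supermartingale, $\hS$ collapses to a finite-variation process admitting frictionless arbitrage, and $\hvp^1_{-}$ is pinned at the maximal admissible leverage. Here $(TWC)$ does the decisive work: on the predictable set where $\hat Z^0$ would strictly decrease --- equivalently, where the drift of $\hS$ would drive $\hvp^1_{-}$ against the leverage constraint --- the continuous process $S$ could cross the level $S_\sigma$ in one direction only, giving $\sigma_+\neq\sigma_-$ and contradicting $(TWC)$. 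Hence $\hat Z^0$ is in fact a local martingale, $\hS$ is a genuine It\^o process of the form \eqref{int:ito} with $\widehat\sigma>0$, and the frictionless duality of Kramkov--Schachermayer \cite{KS99} applied to $\hS$ with the local martingale deflator $\hat Z^0$ produces the frictionless optimiser $\hvt$; matching the dual variables then gives $\hvt=\hvp^1_{-}$ and $x+\hvt\sint\hS_T=V^{liq}_T(\hvp)=\hg(x)$. I expect this crossing argument to be the main difficulty, since it is exactly the step where the weaker hypothesis $(TWC)$ must substitute for the semimartingale property used in \cite{CSY15}.
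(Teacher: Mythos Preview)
Your overall architecture matches the paper's proof: (TWC) $\Rightarrow$ local $(NOIA)$ $\Rightarrow$ $\mu$-consistent local martingale deflators via Theorem~\ref{localFTAP} $\Rightarrow$ duality (Theorem~\ref{dt}) gives primal and dual optimisers $\hvp$ and $\hY$ $\Rightarrow$ upgrade $\hY$ to a local martingale $\Rightarrow$ shadow price. Steps~1 and~2 are essentially correct; one notational slip is that the dual optimiser lives in $\cB(y)$ (a supermartingale deflator), not a priori in $\cZ$---which you implicitly acknowledge in Step~3.

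Step~3, however, is where the paper does real work, and your description is too heuristic to count as a proof. The paper does \emph{not} argue directly that ``on the set where $\hY^0$ strictly decreases, $S$ crosses one way only''. Instead it reduces (via Proposition~3.3 of \cite{CSY15}) the local-martingale property of $\hY$ to the strict positivity $\inf_{0\le t\le T}V^{liq}_t(\hvp)>0$, and proves the latter by contradiction: if $V^{liq}_\sigma(\hvp)=0$ at some stopping time $\sigma$, one first uses the uniformly integrable martingale $\hV=\hvp^0\hY^0+\hvp^1\hY^1$ with $\hV_T>0$ to force $\hvp^1_\sigma\neq 0$ and $\hS_\sigma>(1-\lambda)S_\sigma$ (say $\hvp^1_\sigma>0$); the buy/sell constraint in part~4) of Theorem~\ref{dt} then forbids selling just after $\sigma$. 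Now (TWC) is invoked \emph{constructively}: it guarantees stopping times $\varrho_n\searrow\sigma$ with $S_{\varrho_n}=S_\sigma-\tfrac1n$ while $\hS$ stays bounded away from $(1-\lambda)S$, and a direct computation shows $V^{liq}_{\varrho_n}(\hvp)<0$, contradicting admissibility. Your contrapositive phrasing (``maximal leverage forces one-way crossing'') is morally the same, but you have not supplied any of these ingredients---in particular, why the agent cannot sell, why buying does not help, and why $\hvp^1_\sigma\neq 0$.

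Two further overstatements in Step~3: the conclusion ``$\hS$ is a genuine It\^o process of the form~\eqref{int:ito} with $\widehat\sigma>0$'' is \emph{not} part of Theorem~\ref{t1}; it requires the predictable representation property and belongs to Theorem~\ref{t3}. And once $\hY$ is a local martingale, the paper invokes Proposition~3.7 of \cite{CS14} rather than redoing Kramkov--Schachermayer from scratch.
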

The significance of the condition $(TWC)$ in the above result is that it holds for the fractional Black-Scholes model \eqref{fBS} and does not require that $S$ is a semimartingale. It allows us to conclude the existence of a shadow price process for the fractional Black-Scholes model and utility functions that are bounded from above, like power utility $U(x)=\frac{x^\alpha}{\alpha}$ with risk aversion parameter $\alpha<0$. For utility functions $U:(0,\infty)\to\R$ that are not bounded from above like logarithmic utility $U(x)=\log(x)$ or power utility $ U(x)=\frac{x^\alpha}{\alpha}$ with risk aversion parameter $\alpha\in(0,1)$, it remains to show that the indirect utility \eqref{u(x)} is finite in order to apply Theorem \ref{t1}. We do this below by controlling the number of fluctuations of fractional Brownian motion of size $\delta>0$, which allows to obtain the following complete answer to the question whether or not there exists a shadow price for the fractional Black-Scholes model.
\begin{theorem}\label{t2}
  Let $U:(0,\infty)\to\R$ be a strictly concave, increasing, continuously differentiable utility function, satisfying the Inada conditions \eqref{Inada} and having reasonable asymptotic elasticity $AE(U):= \limsup_{x\to\infty}\frac{xU'(x)}{U(x)}<1$. Fix transaction costs $\lambda\in(0,1)$ and the fractional Black-Scholes model \eqref{fBS}. 
 
 Then, 
\begin{equation}  \label{uffbm}
     u(x)=\sup_{\varphi\in\cA(x)}\E\big[U\big(V^{liq}_T(\varphi)\big)\big]<\infty
   \end{equation}
for all $x>0$. In particular, there exists an optimal trading strategy $\hvp=(\hvp^0_t,\hvp^1_t)_{0_-\leq t\leq T}$ for \eqref{P1} and a shadow price $\widehat{S}=(\widehat{S}_t)_{0\leq t\leq T}$.  
\end{theorem}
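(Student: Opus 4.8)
The plan is to deduce Theorem~\ref{t2} from Theorem~\ref{t1}: the price process \eqref{fBS} is strictly positive and continuous and $U$ has the stated regularity, Inada property and reasonable asymptotic elasticity $AE(U)<1$, so the only two things left to check are that \eqref{fBS} satisfies $(TWC)$ and that $u(x)<\infty$ for some (hence every) $x>0$; the existence assertion is then immediate from Theorem~\ref{t1}. \emph{Step~1 (two way crossing).} Write $X_t:=\log S_t=\mu t+\sigma B^H_t$ and fix a finite stopping time $\tau$. The input is Peyre's law of the iterated logarithm at stopping times for fractional Brownian motion \cite{Pey15}: almost surely $\limsup_{t\downarrow\tau}(B^H_t-B^H_\tau)/\psi(t-\tau)=1$ and $\liminf_{t\downarrow\tau}(B^H_t-B^H_\tau)/\psi(t-\tau)=-1$, where $\psi(h)=\sqrt{2h^{2H}\log\log(1/h)}$. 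Since $H<1$ we have $\psi(h)/h\to\infty$ as $h\downarrow0$, so the Lipschitz drift $\mu(t-\tau)$ is negligible compared with $\sigma(B^H_t-B^H_\tau)$ near $t=\tau$; hence $X_t-X_\tau$ — and therefore $S_t-S_\tau$ — takes strictly positive and strictly negative values in every right neighbourhood of $\tau$. Thus $\tau_+=\tau=\tau_-$ $\p$-a.s. (in the notation of Definition~\ref{def:twc} applied with $\sigma=\tau$), i.e. $(TWC)$ holds.

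\emph{Step~2 (reduction to a moment bound).} Using $AE(U)<1$ I would first produce constants $c_1\in\R$, $c_2\ge0$ and $\gamma\in[0,1)$ with $U(v)\le c_1+c_2v^\gamma$ for all $v>0$: if $U$ is bounded above take $c_2=0$ and $c_1=\sup U$; otherwise $U$ is eventually strictly positive, and fixing $\gamma\in(AE(U),1)$ and integrating the inequality $U'(v)/U(v)\le\gamma/v$ (valid for large $v$) yields the bound with $c_2>0$. Then $\E[U(V^{liq}_T(\vp))]\le c_1+c_2\,\E[(V^{liq}_T(\vp))^\gamma]$ for every $\vp\in\cA(x)$, so it suffices to bound $\sup_{\vp\in\cA(x)}\E[(V^{liq}_T(\vp))^\gamma]$.

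\emph{Step~3 (liquidation value versus number of fluctuations).} Fix $a\in(1,(1-\lambda)^{-1})$, set $\delta:=\log a>0$, and let $N_\delta$ be the number of $\delta$-fluctuations of $X=\log S$ on $[0,T]$, i.e. the number of steps of the increasing stopping times $\rho_0:=0$, $\rho_{k+1}:=\inf\{t>\rho_k:\sup_{[\rho_k,t]}X-\inf_{[\rho_k,t]}X>\delta\}\wedge T$. On each $[\rho_k,\rho_{k+1}]$ the process $S$ stays in a multiplicative band of width $\le a<(1-\lambda)^{-1}$, so any constant in the non-empty interval $[(1-\lambda)\sup_{[\rho_k,\rho_{k+1}]}S,\ \inf_{[\rho_k,\rho_{k+1}]}S]$ is a (trivial, constant) consistent price there; inserting it into the self-financing inequality \eqref{eq:sf} and the admissibility condition \eqref{liq}, and controlling the leverage that a self-financing admissible strategy can build up while $S$ stays in such a band — a no-arbitrage-under-small-transaction-costs estimate in the spirit of Guasoni~\cite{G02} and of~\cite{GRS10} — gives $V^{liq}_{\rho_{k+1}}(\vp)\le C\,V^{liq}_{\rho_k}(\vp)$ with $C=C(\lambda,a)\ge1$. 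Telescoping from $V^{liq}_{0_-}(\vp)=x$ then yields the pathwise, $\vp$-uniform bound $V^{liq}_T(\vp)\le x\,C^{N_\delta}$. (Equivalently one may work with a $\lambda'$-consistent price system $(\widetilde S,\widetilde\Q)$ with $\lambda'<\lambda$ — which exists for \eqref{fBS} by conditional full support — whose frictionless value dominates $V^{liq}(\vp)$ and is a $\widetilde\Q$-supermartingale, so that $\E_{\widetilde\Q}[V^{liq}_T(\vp)]\le x$, and then conclude by Hölder's inequality with the density $d\p/d\widetilde\Q$.)

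\emph{Step~4 (Gaussian fluctuation estimate and conclusion).} Combining Steps~2 and~3, $\E[U(V^{liq}_T(\vp))]\le c_1+c_2x^\gamma\,\E[e^{\gamma(\log C)\,N_\delta}]$ uniformly in $\vp\in\cA(x)$, so $u(x)<\infty$ follows as soon as $\E[e^{pN_\delta}]<\infty$ for every $p>0$. This is the crux and is the content of Section~\ref{sec:fl}: since $B^H$ is a continuous Gaussian process, Gaussian concentration (Borell--TIS) applied to its modulus of continuity on small scales controls the tail $\p(N_\delta>n)$ sharply enough to yield finite exponential — indeed Gaussian — moments of $N_\delta$ (an added linear drift being harmless). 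Once $u(x)<\infty$ is in hand, Step~1 together with Theorem~\ref{t1} delivers an optimal strategy $\hvp$ for \eqref{P1} and a shadow price $\hS$, completing the proof. I expect the main obstacles to be the leverage control underlying the pathwise bound in Step~3 and, above all, the fluctuation moment estimate of Step~4.
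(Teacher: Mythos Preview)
Your overall route --- verify $(TWC)$ via Peyre's law of the iterated logarithm, bound $V^{liq}_T(\vp)$ pathwise by $x\,K^{N_\delta}$ for a $\delta$-fluctuation count $N_\delta$, invoke the exponential moments of $N_\delta$ from Section~\ref{sec:fl}, and then apply Theorem~\ref{t1} --- is exactly the paper's. One simplification: in Step~2 you need not invoke $AE(U)<1$; since $U$ is concave it is dominated by an affine function, so it suffices to bound $\sup_{\vp\in\cA(x)}\E[V^{liq}_T(\vp)]$, and this is precisely how Lemma~\ref{fu} concludes.

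The genuine gap is the telescoping in Step~3. The per-step inequality $V^{liq}_{\rho_{k+1}}(\vp)\le C\,V^{liq}_{\rho_k}(\vp)$ fails for any uniform $C=C(\la,a)$. On a fixed path where $S$ happens to rise on $[\rho_k,\rho_{k+1}]$, pathwise admissibility does not prevent the position at $\rho_k$ from being long with $V^{liq}_{\rho_k}(\vp)$ equal (or arbitrarily close) to zero --- e.g.\ buy $x/\la$ shares at $\rho_k$ when $S_{\rho_k}=1$, giving $V^{liq}_{\rho_k}=0$ --- and the subsequent favourable move then makes $V^{liq}_{\rho_{k+1}}>0$, so no finite $C$ works. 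The paper's Lemma~\ref{fu} repairs this by telescoping the \emph{optimistic value} $V^{opt}(\vp_t):=\vp^0_t+(\vp^1_t)^+S_t-(\vp^1_t)^-(1-\la)S_t$ instead: one always has $V^{liq}\le V^{opt}$, with equality whenever $\vp^1=0$ (in particular at $t=0_-$ and $t=T$), and a clairvoyant-agent optimisation over each fluctuation interval yields $V^{opt}(\vp_t)\le K\,V^{opt}(\vp_{\sigma_j})$; hence $V^{liq}_T(\vp)=V^{opt}_T(\vp)\le K^{N_\delta}x$. Your parenthetical alternative via a $\la'$-consistent price system and H\"older's inequality would require moment bounds on $d\p/d\widetilde\Q$ that you have not established and that are far from obvious for fractional models.
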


As explained in Section 5 of \cite{CS15}, there is a filtered probability space $(\Om,\F,(\F_t)_{0\leq t\leq T},\p)$ supporting a Brownian motion $W=(W_t)_{0\leq t\leq T}$ that has the predictable representation property conditional on $\F_0$. The connection to frictionless financial markets then allows us to establish the following result.

\begin{theorem}\label{t3}
  Let $\big(\Om,\F,(\F_t)_{0\leq t\leq T},\p\big)$ be a filtered probability space supporting a Brownian motion $W=(W_t)_{0\leq t\leq T}$ that has the predictable representation property conditional on $\F_0$.  Fix transaction costs $\lambda\in(0,1)$ and the fractional Black-Scholes model \eqref{fBS}.
  
  Then, there exists an optimal trading strategy $\hvp=(\hvp^0_t,\hvp^1_t)_{0_-\leq t\leq T}$ and a shadow price $\widehat{S}=(\widehat{S}_t)_{0\leq t\leq T}$ for the problem of maximising logarithmic utility
\begin{equation}  \label{P1:log}
\E\big[\log\big(V^{liq}_T(\varphi)\big)\big]\to \max!, \quad \varphi\in\cA(x),
   \end{equation}
for all $x>0$.

The shadow price $\widehat{S}=(\widehat{S}_t)_{0\leq t\leq T}$ is given by an It\^o process
\begin{align}\label{t3:ito}
d \widehat{S}_t = \widehat{S}_t (\widehat{\mu}_t dt + \widehat{\sigma}_t dW_t), \quad 0 \leq t \leq T,
\end{align}
where $\widehat{\mu}=(\widehat{\mu}_t)_{0 \leq t \leq T}$ and $\widehat{\sigma}=(\widehat{\sigma}_t)_{0 \leq t \leq T}$ are predictable processes such that the solution to \eqref{t3:ito} is well-defined in the sense of Itô integration.

The coefficients $\widehat{\mu}=(\widehat{\mu}_t)_{0 \leq t \leq T}$ and $\widehat{\sigma} = (\widehat{\sigma}_t)_{0 \leq t \leq T}$ of the It\^o process \eqref{t3:ito} and the optimal trading strategy $\hvp=(\hvp^0_t,\hvp^1_t)_{0_-\leq t\leq T}$ for \eqref{P1:log} are related via
\be
\widehat{\pi}_t=\frac{\widehat{\mu}_t}{\widehat{\sigma}^2_t}=\frac{\hvp^1_{t-}\hS_t}{\hvp^0_{t-}+\hvp^1_{t-}\hS_t},\qquad 0\leq t\leq T.\label{t3:opt}
\ee
\end{theorem}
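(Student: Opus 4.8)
The plan is to obtain Theorem \ref{t3} as a direct specialisation of Theorem \ref{t2} to logarithmic utility, followed by an analysis of the resulting shadow price via It\^o calculus. First I would note that $U(x)=\log(x)$ is strictly concave, increasing, continuously differentiable, satisfies the Inada conditions \eqref{Inada}, and has $AE(U)=\limsup_{x\to\infty}\frac{x\cdot(1/x)}{\log x}=\limsup_{x\to\infty}\frac{1}{\log x}=0<1$. Hence Theorem \ref{t2} applies verbatim to the fractional Black--Scholes model \eqref{fBS} on any filtered probability space satisfying the usual conditions, in particular on the one carrying the Brownian motion $W$ with the conditional predictable representation property. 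This immediately yields $u(x)<\infty$ for all $x>0$, an optimal strategy $\hvp=(\hvp^0,\hvp^1)$ for \eqref{P1:log}, and a shadow price $\hS$ in the sense of Definition \ref{def:sp}; the optimal frictionless strategy $\hvt$ for $\hS$ satisfies $x+\hvt\sint\hS_T=V^{liq}_T(\hvp)=\hg(x)$.

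Next I would establish the It\^o representation \eqref{t3:ito}. Since $\hS$ is a positive semimartingale taking values in the bid-ask spread $[(1-\lambda)S,S]$ and admitting the frictionless optimiser $\hvt$, the frictionless fundamental theorem (and finiteness of $u(x)$) gives a local martingale deflator for $\hS$; in particular $\hS$ is a special semimartingale with canonical decomposition into a local martingale part and a finite-variation drift. Continuity of $\hS$ requires a short argument: $S$ is continuous, the bid-ask spread $[(1-\lambda)S,S]$ is a continuous ``moving window'', and one shows that any jump of $\hS$ would contradict optimality (this is the standard fact that the shadow price of a continuous price process is itself continuous — cf.\ \cite{CSY15,CS15}). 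Given continuity, I would invoke the (conditional) predictable representation property of $W$ to write the local martingale part of $\log\hS$ as a stochastic integral $\int_0^\cdot\widehat\sigma_s\,dW_s$; absorbing the drift and the It\^o correction into $\widehat\mu$ gives $d\log\hS_t=(\widehat\mu_t-\tfrac12\widehat\sigma_t^2)\,dt+\widehat\sigma_t\,dW_t$, equivalently \eqref{t3:ito}, with $\widehat\mu,\widehat\sigma$ predictable and such that the stochastic integrals are well defined (finiteness of $\int_0^T\widehat\sigma_t^2\,dt$ a.s.\ follows from $\hS$ being a continuous semimartingale, and integrability of the drift from it being special).

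Finally I would derive the relation \eqref{t3:opt}. For logarithmic utility the optimal frictionless strategy for an It\^o price process \eqref{t3:ito} is the classical myopic (Merton) fraction: the optimal proportion of wealth held in the risky asset is $\widehat\pi_t=\widehat\mu_t/\widehat\sigma_t^2$; this follows by pointwise maximisation of the logarithmic growth rate $\widehat\pi\widehat\mu_t-\tfrac12\widehat\pi^2\widehat\sigma_t^2$ in $\widehat\pi$, and is rigorous here because $\hS$ admits a local martingale deflator and $u(x)<\infty$, so the standard verification for log-utility in frictionless It\^o markets applies (see \cite{KS99}). On the other hand, by part 3) of Definition \ref{def:sp} the optimal frictionless position coincides with the left-continuous stock holdings $\hvp^1_{t-}$ of the transaction-cost optimiser, and the corresponding frictionless wealth at time $t$ is $X_t=\hvp^0_{t-}+\hvp^1_{t-}\hS_t$ (on the bid-ask spread, $\hS$ and $S$ agree at the support of the trading measures, so liquidation and book value coincide along the optimum); hence the wealth fraction in the risky asset equals $\frac{\hvp^1_{t-}\hS_t}{\hvp^0_{t-}+\hvp^1_{t-}\hS_t}$, which is \eqref{t3:opt}.

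\smallskip

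The main obstacle I anticipate is not the invocation of Theorem \ref{t2} (which is routine once one checks the hypotheses for $\log$), but the two ``soft'' structural facts about the shadow price: establishing that $\hS$ is continuous (ruling out jumps by an optimality/no-arbitrage argument on the bid-ask window) and then legitimately applying the conditional predictable representation property of $W$ to represent $\hS$ as an It\^o process with the stated integrability of $\widehat\mu$ and $\widehat\sigma$. The identification \eqref{t3:opt} is then mostly bookkeeping, the only subtlety being to argue that the frictionless wealth process driven by $\hvt$ and the self-financing wealth of $\hvp$ under transaction costs have the same risky-asset fraction at each time, which uses that trading for the transaction-cost optimiser occurs only when $\hS$ sits at the relevant edge of the spread.
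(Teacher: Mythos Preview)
Your first and third steps (invoking Theorem~\ref{t2} for $U=\log$, and identifying the Merton fraction with the wealth fraction of $\hvp$ via Definition~\ref{def:sp}) match the paper's proof essentially verbatim.

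The difference, and the gap, is in your second step. You work directly with $\hS$: you argue it is a special semimartingale, then try to show it is continuous by an ad hoc ``jumps contradict optimality'' argument, and only then apply the predictable representation property to the martingale part of $\log\hS$. The continuity step is not the ``standard fact'' you claim; the references \cite{CSY15,CS15} do not prove that a shadow price for a continuous $S$ is automatically continuous, and in a filtration where all local martingales are continuous a special semimartingale can still jump through its predictable finite-variation part. So as written this step is a genuine hole.

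The paper avoids this entirely by going through the \emph{dual optimiser} rather than the shadow price itself. From the proof of Theorem~\ref{t1} one already knows that $\widehat{Y}=(\widehat{Y}^0,\widehat{Y}^1)$ is a \emph{local martingale} (this is exactly what Proposition~3.3 of \cite{CSY15} delivers once $V^{liq}(\hvp)>0$). The conditional predictable representation property of $W$ then writes each $\widehat{Y}^i$ as $\widehat{Y}^i_0\,\mathcal{E}\big(\int\eta^i\,dW\big)$, so both are automatically continuous It\^o processes. Since $\hS=\widehat{Y}^1/\widehat{Y}^0$, an application of It\^o's formula to this ratio immediately yields the representation \eqref{t3:ito}, with continuity and the integrability of $\widehat\mu,\widehat\sigma$ coming for free. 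This is the content of Lemma~5.1 in \cite{CS15}, which the paper simply cites. Routing through $\widehat{Y}$ rather than $\hS$ is the missing idea in your argument.
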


It is well known that the shadow price is related to the solution of a suitable dual problem of the primal problem \eqref{P1}; see Proposition 3.9 of \cite{CS14} for example. In the present setting, we explain how to setup this dual problem in the next section.

\section{Duality theory}\label{sec:dt}

In this section, we discuss the formulation of the dual problem of the utility maximisation problem \eqref{P1}. To that end, we recall the following notions.

 A \emph{$\lambda$-consistent price system} is a pair of stochastic processes $Z=(Z^0_t, Z^1_t)_{0 \leq t \leq T}$ consisting of the density process $Z^0=(Z^0_t)_{0 \leq t \leq T}$ of an equivalent local martingale measure $\Q\sim \p$ for a price process $\widetilde{S}=(\widetilde{S}_t)_{0 \leq t \leq T}$ evolving in the bid-ask spread $[(1-\lambda)S,S]$ and the product $Z^1=Z^0\widetilde{S}$. Requiring that $\widetilde{S}$ is a local martingale under $\Q$ is tantamount to the product $Z^1=Z^0\widetilde{S}$ being a local martingale under $\p$. Under transaction costs, $\lambda$-consistent price systems ensure ``absence of arbitrage'' in the sense of ``no free lunch with vanishing risk'' $(NFLVR)$ similarly as equivalent local martingale measures in the frictionless case. In the context of portfolio optimisation, usually not the full strength of the condition $(NFLVR)$ is needed and it is enough to have this property locally. For portfolio optimisation under transaction costs, this is captured by the notion of a $\lambda$-consistent local martingale deflator. A \emph{$\lambda$-consistent local martingale deflator} is a pair of strictly positive local martingales $Z=(Z^0_t, Z^1_t)_{0 \leq t \leq T}$ such that $\widetilde{S}:=\frac{Z^1}{Z^0}$ is evolving within the bid-ask spread $[(1-\lambda)S,S]$ and $\E[Z^0_0]=1$. We denote the set of all $\lambda$-consistent local martingale deflators by $\cZ$. Note that, if $(\tau_n)_{n=1}^\infty$ is a localising sequence of stopping times such that the stopped process $(Z^0)^{\tau_n}=(Z^0_{\tau_n\wedge t})_{0\leq t\leq T}$ is a true martingale, then $Z^{\tau_n}=(Z^0_{\tau_n\wedge t},Z^1_{\tau_n\wedge t})_{0\leq t\leq T}$ is a $\lambda$-consistent price system for the stopped process $S^{\tau_n}=(S_{\tau_n\wedge t})_{0\leq t\leq T}$. In this sense, the condition that $S$ admits a $\lambda$-consistent local martingale deflator is indeed the local version of the condition that $S$ admits a $\lambda$-consistent price system. The set $\mathcal{B}(y)$ of all \emph{$\lambda$-consistent supermartingale deflators} consists of all pairs of non-negative c\`adl\`ag supermartingales $Y=(Y^0_t, Y^1_t)_{0 \leq t \leq T}$ such that $\E[Y^0_0]=y,$ $Y^1=Y^0 \widetilde{S}$ for some $[(1-\lambda)S,S]$-valued process $\widetilde{S}=(\widetilde{S}_t)_{0 \leq t \leq T}$ and $Y^0(\varphi^0 + \varphi^1 \widetilde{S})= Y^0 \varphi^0 + Y^1 \varphi^1$ is a non-negative c\`adl\`ag supermartingale for all $\varphi \in \mathcal{A} (1).$ Note that $y\cZ\subseteq \cB(y)$ for $y>0$ by Proposition 2.6 of \cite{CSY15}. We set $\cD(y):=\{Y^0_T~|~Y=(Y^0,Y^1)\in\cB(y)\}$. By Proposition 2.9 of \cite{CSY15}, we have that $\cD(y)$ coincides with the closed, convex, solid hull of $D(y)=\{yZ^0_T~|~Z=(Z^0,Z^1)\in\cZ\}$. 
 
The following result shows how the solution to the utility maximisation problem \eqref{P1} is related to the solution of a suitable dual problem. For a continuous price process $S=(S_t)_{0\le t\le T}$, it has been established in Theorem 2.10 of \cite{CSY15}.

\begin{theorem}\label{dt}  Let $S=(S_t)_{0\le t\le T}$ be a strictly positive, continuous process. 
Suppose that $S$ admits a $\mu$-consistent local martingale deflator for all $\mu\in(0,\lambda)$, the asymptotic elasticity of $U$ is strictly less than one, i.e., $AE(U):=\limsup\limits_{x\to\infty}\frac{xU'(x)}{U(x)}<1$, and the maximal expected utility is finite,
$$u(x):=\sup_{g\in\cC(x)}\E[U(g)]<\infty,$$ for some $x\in(0,\infty)$. Then:
\bi
\item[{\bf 1)}] The primal value function $u$ and the dual value function
$$v(y):=\inf_{h\in\cD(y)}\E[V(h)],$$
where $V(y)=\sup_{x>0}\{U(x)-xy\}$ for $y>0$ denotes the Legendre transform of $-U(-x)$, are conjugate, i.e.,
\begin{eqnarray*}u(x)=\inf_{y>0}\{v(y)+xy\},\qquad v(y)=\sup_{x>0}\{u(x)-xy\},
\end{eqnarray*}
and continuously differentiable on $(0,\infty)$. The functions $u$ and $-v$ are strictly concave, strictly increasing, and satisfy the Inada conditions
$$\text{$\Lim_{x\to0}u'(x)=\infty,\qquad\Lim_{y\to\infty}v'(y)=0,\qquad\Lim_{x\to\infty}u'(x)=0,\qquad\Lim_{y\to0}v'(y)=-\infty$}.$$
\item[{\bf 2)}] For all $x,y>0$, the solutions $\widehat g (x)\in\cC(x)$ and $\widehat h(y)\in\cD(y)$ to the primal problem
\begin{equation*}
\textstyle
\E\left[U(g)\right]\to\max!, \qquad{g\in\cC(x)},
\end{equation*}
and the dual problem
\begin{equation}
\textstyle
\E\left[V(h)\right]\to\min!\label{D1}, \qquad{h\in\cD(y)},
\end{equation}
exist, are unique, and there are
$\big(\hvp^0(x),\hvp^1(x)\big)\in\cA(x)$ and $\big(\widehat{Y}^0(y),\widehat{Y}^1(y)\big)\in\cB(y)$ such that
\be
\text{$V^{liq}_T(\hvp)=\hvp^0_T(x)=\widehat g(x)\qquad$ and $\qquad\widehat{Y}^0_T(y)=\widehat h(y)$.}\label{martcond}
\ee
\item[{\bf 3)}] For all $x>0$, let $\widehat y (x)=u'(x)>0$ which is the unique solution to
$$
v(y)+xy\to\min!,\qquad y>0.
$$
Then, $\widehat g (x)$ and $\widehat h \big(\widehat y(x)\big)$ are given by $(U')^{-1}\big(\widehat h \big(\widehat y(x)\big)\big)$ and $U'\big(\widehat g (x)\big)$, respectively, and we have that $\E\big[\widehat g(x)\widehat h\big(\widehat y (x)\big)\big]=x\widehat y(x)$. In particular, the process
$$\widehat{Y}^0\big(\widehat y(x)\big)\hvp^0(x)+ \widehat{Y}^1\big(\widehat y(x)\big)\hvp^1(x)=\Big(\widehat{Y}^0_t\big(\widehat y(x)\big)\hvp^0_t(x)+ \widehat{Y}^1_t\big(\widehat y(x)\big)\hvp_t^1(x)\Big)_{0\leq t\leq T}$$
is a martingale for all $\big(\hvp^0(x),\hvp^1(x)\big)\in\cA(x)$ and $\big(\widehat{Y}^0\big(\widehat y(x)\big),\widehat{Y}^1\big(\widehat y(x)\big)\big)\in\cB\big(\widehat y (x)\big)$ satisfying \eqref{martcond} with $y=\widehat y (x)$.
\item[{\bf 4)}] 
Moreover, for $\hS=\tfrac{\widehat{Y}^1}{\widehat{Y}^0}$, we have
$$\widehat{Y}^0\big(\widehat y(x)\big)\hvp^0(x)+ \widehat{Y}^1\big(\widehat y(x)\big)\hvp^1(x)=\widehat{Y}^0\big(\widehat y(x)\big)\left(x+ \hvp^1_{-}(x)\sint \hS\right).$$
This implies in particular that 
\begin{align*}
\{d\hvp^{1,c}>0\}&\subseteq \{\hS=S\}, & \{d\hvp^{1,c}<0\}&\subseteq \{\hS=(1-\lambda)S\},\notag \\
\{\Delta \hvp^1>0\}&\subseteq \{\hS=S\}, &  \{\Delta \hvp^1<0\}&\subseteq \{\hS=(1-\lambda)S	\}. \notag
\end{align*}
\item[{\bf 5)}] Finally, we have $v(y)=\Inf_{(Z^0,Z^1)\in\mathcal{Z}}\E[V(yZ^0_T)]$.
\ei
\end{theorem}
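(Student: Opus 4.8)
The plan is to deduce Theorem~\ref{dt} from the abstract duality machinery of Kramkov--Schachermayer type, following the proof of Theorem~2.10 of \cite{CSY15}, the point requiring care being that the hypothesis ``$S$ admits a $\mu$-consistent local martingale deflator for every $\mu\in(0,\lambda)$'' has to stand in for the (NUPBR)/semimartingale assumption. The first step is the \emph{bipolar relation}: for all $x,y>0$, an element $g\in L^0_+$ belongs to (the solid hull of) $\cC(x)$ if and only if $\E[gh]\le xy$ for every $h\in\cD(y)$, and symmetrically for $\cD(y)$. The inequality $\E[gh]\le xy$ for $g=V^{liq}_T(\vp)$ with $\vp\in\cA(x)$ and $h=Y^0_T$ with $Y\in\cB(y)$ is immediate from the definition of $\cB(y)$, since $Y^0\vp^0+Y^1\vp^1$ is then a non-negative supermartingale starting at $xy$ and dominating $Y^0_T g$ at time $T$; passing to the closed convex solid hull and invoking Proposition~2.9 of \cite{CSY15} extends this to all $h\in\cD(y)$. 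The converse inclusion — that every $g$ passing the dual test is attainable, i.e.\ a superreplication/closedness theorem for $\cC(x)$ — is where the no-arbitrage hypothesis enters. I expect this to be the main obstacle: one must recover the Fatou-closedness of $\cC(x)$ and the superreplication duality at transaction cost level $\lambda$ from deflators that only exist for the strictly larger spreads $[(1-\mu)S,S]$, $\mu<\lambda$; this is either essentially the content of a proposition already available in \cite{CSY15}, in which case the proof is a reduction, or it requires a genuine limiting argument $\mu\uparrow\lambda$ combined with the localisation by stopping times $(\tau_n)$ recalled before the theorem (along which a deflator becomes a true consistent price system), controlling the dual elements as the bid-ask spread shrinks.

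Granting the bipolar relation, and noting that $\cZ\neq\emptyset$ by hypothesis (so $\cD(y)\neq\emptyset$), that $u$ is finite on all of $(0,\infty)$ (from $u(x_0)<\infty$ for one $x_0$, by concavity of $U$), and that $AE(U)<1$, item~1) and the existence/uniqueness part of item~2) follow from the abstract utility duality theorem in the version that does not require the dual domain to be $L^1$-bounded or a set of probability measures (as used in \cite{CS14,CSY15}): one obtains the conjugacy $u(x)=\inf_{y>0}\{v(y)+xy\}$ and $v(y)=\sup_{x>0}\{u(x)-xy\}$, the smoothness and strict concavity of $u$ and $-v$, the Inada relations, and unique optimisers $\widehat g(x)\in\cC(x)$, $\widehat h(y)\in\cD(y)$. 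Item~3) is then the standard first-order analysis: $\widehat y(x)=u'(x)$ is the unique minimiser of $y\mapsto v(y)+xy$, the pointwise relations $\widehat g(x)=(U')^{-1}(\widehat h(\widehat y(x)))$ and $\widehat h(\widehat y(x))=U'(\widehat g(x))$ come from differentiating the conjugacy, and $\E[\widehat g(x)\widehat h(\widehat y(x))]=x\widehat y(x)$ from the resulting saddle-point identity; the martingale (not merely supermartingale) property of $\widehat Y^0\hvp^0+\widehat Y^1\hvp^1$ then follows because this non-negative supermartingale starts at $x\widehat y(x)$ and has terminal value with the same expectation.

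For the ``process-level'' refinements — that $\widehat g(x)=V^{liq}_T(\hvp)$ for some $\hvp=(\hvp^0,\hvp^1)\in\cA(x)$ and $\widehat h(y)=\widehat Y^0_T$ for some $\widehat Y=(\widehat Y^0,\widehat Y^1)\in\cB(y)$ (completing item~2)), together with item~4) — I would argue as in \cite{CSY15}: the primal optimiser's terminal value is realised by a finite-variation admissible strategy $\hvp$ via a Koml\'os/Campi--Schachermayer convex-combination argument together with closedness of $\cA(x)$, and the dual optimiser is realised by a c\`adl\`ag supermartingale deflator $\widehat Y$ with $\widehat Y^1=\widehat Y^0\hS$, constructed by an optional decomposition in which the continuity of $S$ is used. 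The identity in item~4) is integration by parts applied to $\widehat Y^0\hvp^0+\widehat Y^1\hvp^1$, using the self-financing condition \eqref{eq:sf} and $\widehat Y^1=\widehat Y^0\hS$; the complementary-slackness inclusions $\{d\hvp^{1,c}>0\}\subseteq\{\hS=S\}$, $\{d\hvp^{1,c}<0\}\subseteq\{\hS=(1-\lambda)S\}$ (and their jump analogues) then follow because, this process being a martingale, the non-negative ``leakage'' integrands $\widehat Y^0(S-\hS)\,d\hvp^{1,\uparrow}$ and $\widehat Y^0(\hS-(1-\lambda)S)\,d\hvp^{1,\downarrow}$ must vanish.

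Finally, item~5), $v(y)=\inf_{Z\in\cZ}\E[V(yZ^0_T)]$: one inequality is immediate since $D(y)\subseteq\cD(y)$. For the other, observe that $\cZ$ is convex — a convex combination of $\lambda$-consistent local martingale deflators is again one, because the associated mid-price is a weighted average (with positive weights summing to one) of mid-prices and hence still evolves in $[(1-\lambda)S,S]$ — so $D(y)$ is convex; since $V$ is decreasing, passing to the downward solid hull does not lower the infimum of $h\mapsto\E[V(h)]$; and the closure in probability is absorbed by a Fatou argument, the uniform integrability of $\{V^-(h_n)\}$ along a minimising sequence being guaranteed by $AE(U)<1$. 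Combining these, $v(y)=\inf_{h\in\cD(y)}\E[V(h)]=\inf_{h\in D(y)}\E[V(h)]=\inf_{Z\in\cZ}\E[V(yZ^0_T)]$, which is item~5).
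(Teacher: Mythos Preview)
The paper does not prove Theorem~\ref{dt} at all: its entire proof reads ``See Theorem 2.10 of \cite{CSY15} and Remark 2.13 of \cite{CSY15}. Compare also Theorems 3.2 and Theorem 3.5 of \cite{CS14}.'' Your proposal is therefore not to be compared against an argument in the present paper but against the cited results, and in that respect you are on the right track: you are explicitly reconstructing the route of Theorem~2.10 in \cite{CSY15} --- bipolar relation between $\cC(x)$ and $\cD(y)$, abstract Kramkov--Schachermayer duality under $AE(U)<1$, process-level realisation of the optimisers via Koml\'os-type compactness, integration by parts for item~4), and a density/convexity argument for item~5). This is the same approach as the sources the paper invokes.

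One remark on your sketch: you flag the superreplication/closedness step (recovering Fatou-closedness of $\cC(x)$ from deflators that exist only for $\mu<\lambda$) as the main obstacle and speculate whether it is ``essentially the content of a proposition already available in \cite{CSY15}'' or requires a new limiting argument. It is the former: this is exactly what Propositions~2.7--2.9 of \cite{CSY15} provide, and Remark~2.13 there points out that the hypothesis ``$\mu$-consistent local martingale deflator for all $\mu\in(0,\lambda)$'' suffices in place of (NUPBR). So for a complete write-up you should simply cite those propositions rather than attempt a fresh $\mu\uparrow\lambda$ limit.
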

\bp
See Theorem 2.10 of \cite{CSY15} and Remark 2.13 of \cite{CSY15}. Compare also Theorems 3.2 and Theorem 3.5 of \cite{CS14}.
\ep

In order to apply the theorem above to prove Theorem \ref{t1}, we need to show that the condition $(TWC)$ of ``two way crossing'' implies the existence of $\mu$-consistent local martingale deflators for all $\mu\in(0,1)$. This follows from a local version of the fundamental theorem of asset pricing for continuous processes under small transaction costs. For this, we use the subsequent no arbitrage concepts; compare \cite{GRS10}. 

\begin{definition}\label{def:NOIA}
 Let $S=(S_t)_{0\le t\le T}$ be a strictly positive, continuous process. 
 We say that $S$ allows for an ``obvious arbitrage'', if there are $\alpha >0$ and $[0,T]\cup\{\infty\}$-valued stopping times $\sigma\leq\tau$ with 
   $\p[\sigma <\infty]=\p[\tau <\infty] >0$ such that either
    $$ (a)\quad S_\tau \geq (1+\alpha) S_\sigma, \quad \ \mbox{a.s.~on} \ \{\sigma <\i\},$$
  or
    $$ (b)\quad S_\tau \leq \frac{1}{1+\alpha} S_\sigma, \quad\quad \mbox{a.s.~on} \ \{\sigma <\i\}.$$
 In the case of (b) we also assume that $(S_t)_{\sigma \leq t \leq \tau}$ is uniformly bounded.

 We say that $S$ allows for an ``obvious immediate arbitrage'', if, in addition, we have 
   $$ (a)\quad S_t\geq S_\sigma, \quad \mbox{ for }  \sigma\leq t\leq \tau,  \, \mbox{a.s.~on} \ \{\sigma <\i\}, $$
  or
   $$ (b)\quad S_t\leq S_\sigma, \quad \mbox{ for }  \sigma\leq t\leq \tau,  \, \mbox{a.s.~on} \ \{\sigma <\i\}. $$

 We say that $S$ satisfies the condition $(NOA)$ (respectively, $(NOIA)$) of ``no obvious arbitrage'' (respectively, ``no obvious immediate arbitrage''), 
   if no such opportunity exists.
\end{definition}

It is indeed rather obvious how to make an arbitrage if $(NOA)$ fails, provided the transaction costs $0 <\lambda <1$ are smaller than $\alpha$.
Assuming, e.g., condition $(a)$, one goes long in the asset $S$ at time $\sigma$ and closes the position at time $\tau$. 
In case of an obvious immediate arbitrage one is in addition assured that during such an operation the stock price will never fall under the initial value $S_\sigma$. 
In particular this gives an unbounded profit with bounded risk under transaction costs $\lambda$.  

In the case of condition $(b)$, one does a similar operation by going short in the asset $S$. 
The boundedness condition in the case (b) of $(NOA)$ makes sure that this strategy is admissible.

Using $(NOIA)$ in addition to $(NOA)$ then allows us to obtain the following local version of the fundamental theorem of asset pricing for continuous processes under small transaction costs, which is a slight strengthening of Theorem 1 of \cite{GRS10}. 

\begin{theorem} \label{localFTAP}
  Let $S=(S_t)_{0\leq t\leq T}$ be a strictly positive, continuous process. The following assertions are equivalent.
  \begin{enumerate}[$(i)$]
   \item Locally, there is no obvious immediate arbitrage $(NOIA)$. 
   \item Locally, there is no obvious arbitrage $(NOA)$.
   \item Locally, for each $0<\mu<1$, there exists a $\mu$-consistent price system.
   \item For each $0<\mu<1$, there exists a $\mu$-consistent local martingale deflator.
  \end{enumerate}
\end{theorem}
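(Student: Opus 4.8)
The plan is to establish the cycle of implications $(i)\Rightarrow(ii)\Rightarrow(iv)\Rightarrow(iii)\Rightarrow(i)$, of which only the first link is genuinely new; the remaining three are either formal or reduce, by localisation, to the non-local Theorem~1 of \cite{GRS10}.

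I begin with the two short links. For $(iv)\Rightarrow(iii)$ I repeat the localisation observation made in the paragraph preceding Theorem~\ref{dt}: given a $\mu$-consistent local martingale deflator $Z=(Z^0,Z^1)$, choose a localising sequence $(\tau_n)$ for which $(Z^0)^{\tau_n}$ is a true martingale; then each $Z^{\tau_n}$ is a $\mu$-consistent price system for $S^{\tau_n}$, so $S$ admits $\mu$-consistent price systems locally. For $(iii)\Rightarrow(i)$ I combine the ``easy half'' of the fundamental theorem with a triviality. Suppose that, along a localising sequence $(\rho_n)$, the process $S^{\rho_n}$ admits a $\mu$-consistent price system for every $\mu\in(0,1)$, and that some $S^{\rho_n}$ allowed an obvious arbitrage with parameter $\alpha$, say of type~$(a)$: $S_\tau\ge(1+\alpha)S_\sigma$ on $\{\sigma<\i\}$. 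Pick $\mu$ so small that $(1-\mu)(1+\alpha)>1$; the associated process $\widetilde S=Z^1/Z^0$, a strictly positive $\Q$-supermartingale valued in $[(1-\mu)S,S]$, would then satisfy $\widetilde S_\tau\ge(1-\mu)S_\tau\ge(1-\mu)(1+\alpha)S_\sigma\ge(1-\mu)(1+\alpha)\widetilde S_\sigma>\widetilde S_\sigma$ on $\{\sigma<\i\}$, contradicting optional sampling; type~$(b)$ is symmetric, the boundedness clause of $(NOA)$ making $\widetilde S$ a genuine bounded martingale on $\llbracket\sigma,\tau\rrbracket$. Hence $S$ satisfies $(NOA)$ locally, and a fortiori $(NOIA)$ locally, since every obvious immediate arbitrage is an obvious arbitrage.

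For $(ii)\Rightarrow(iv)$ I invoke Theorem~1 of \cite{GRS10}. Fix a localising sequence $(\rho_n)$ for which each $S^{\rho_n}$ --- continuous, strictly positive, finite horizon --- satisfies $(NOA)$. The cited theorem then provides, for every $\mu\in(0,1)$, a $\mu$-consistent price system of $S^{\rho_n}$, and pasting these along $(\rho_n)$ yields a pair $(Z^0,Z^1)$ of strictly positive processes with $\E[Z^0_0]=1$, with $\widetilde S=Z^1/Z^0$ valued in $[(1-\mu)S,S]$, and with $Z^0,Z^1$ local martingales reduced by $(\rho_n)$ --- in other words, a $\mu$-consistent local martingale deflator. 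The concatenation is of the kind standard in this setting; the one point needing attention is that the successive price systems must be chosen to agree at the junction times $\rho_n$, so that $Z^1$ does not acquire a spurious jump there.

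The main obstacle is $(i)\Rightarrow(ii)$: upgrading ``$(NOIA)$ locally'' to ``$(NOA)$ locally''. The idea is to carve an obvious \emph{immediate} arbitrage out of an obvious arbitrage. Suppose $S$ allows an obvious arbitrage of type~$(a)$, $S_\tau\ge(1+\alpha)S_\sigma$ on $\{\sigma<\i\}$, and restrict to the $\F_\sigma$-measurable set $\{\sigma<\i,\ S_\sigma\in[a,b]\}$ --- of positive probability for suitable rationals $0<a<b$ --- so that the fluctuation sizes that occur are uniformly bounded below. Writing $M_t:=\min_{\sigma\le s\le t}S_s$ for the running minimum and $\beta:=(1+\alpha)^{1/2}-1>0$, the stopping time $\rho:=\inf\{t\ge\sigma\mid S_t\ge(1+\beta)M_t\}$ satisfies $\rho\le\tau$ on this set; if moreover $M_\rho=S_\sigma$, then the stopping times $\sigma\le\rho$ already exhibit an obvious immediate arbitrage with parameter $\beta$. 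Otherwise $S$ has dipped below $S_\sigma$ before $\rho$, and the natural restart point --- the last time before $\rho$ at which $S$ attains its running minimum --- is not a stopping time, so one has to iterate hitting times: alternately wait for $S$ to climb to $(1+\beta)$ times its value at the last restart, or to fall back below that value, restarting at each such downcrossing. The step where I expect the real work to lie is proving that on (almost all of) the chosen set this iteration terminates; the mechanism is that a continuous path on $[0,T]$ performs only finitely many oscillations of any fixed size, which forces some ``up'' phase to reach its target before falling back, producing a bona fide pair of stopping times along which $S$ rises by the factor $1+\beta$ without ever dropping below its starting value. It is exactly the word ``locally'' in the hypothesis that supplies the room to stop and restart in this way; type~$(b)$ is handled symmetrically, using the boundedness clause of $(NOA)$. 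Once $(NOA)$ holds locally, the cycle is closed.
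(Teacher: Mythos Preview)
Your cycle $(i)\Rightarrow(ii)\Rightarrow(iv)\Rightarrow(iii)\Rightarrow(i)$ is a genuinely different route from the paper's, and two of the links have real gaps.

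The main problem is $(i)\Rightarrow(ii)$. You correctly identify that the last time the running minimum is attained is not a stopping time and propose to iterate: restart whenever $S$ falls back below the current starting level, and stop when $S$ climbs by the factor $(1+\beta)$ first. But this iteration does not produce a stopping time for the \emph{starting} point of the obvious immediate arbitrage. Concretely, if $K$ is the first index at which the ``up'' move $u_K$ comes before the ``down'' move $d_K$, then $u_K$ is a stopping time but $\sigma_K$ is not: the event $\{K=k\}$ lies in $\F_{u_k}$, not in $\F_{\sigma_k}$, so $\{\sigma_K\le t\}$ need not be $\F_t$-measurable. Nor can one simply fix a deterministic $k$ and restrict to $\{K=k\}$, since that set is again not $\F_{\sigma_k}$-measurable; and on the full set $\{\sigma_k<\infty\}$ the immediacy condition $S_t\ge S_{\sigma_k}$ fails whenever $d_k<u_k$. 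So the ``real work'' you flag is not just termination of the iteration but the production of a bona fide stopping time $\sigma'$, and your sketch does not do this.

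A second gap is in the pasting for $(ii)\Rightarrow(iv)$. You note that the successive price systems must agree at the junction times $\rho_n$, but this cannot be arranged directly: two $\mu$-consistent price systems for $S^{\rho_n}$ and $S^{\rho_{n+1}}$ have no reason to give the same value of $\widetilde S$ at $\rho_n$. The paper's $(iii)\Rightarrow(iv)$ handles exactly this point by a multiplicative correction, crucially exploiting that $(iii)$ gives a $\mu'$-consistent price system for \emph{every} $\mu'\in(0,1)$: one starts with a $\overline\mu$-system on $[0,\rho_n]$ with $\overline\mu<\mu$, takes a $\check\mu$-system on $[0,\rho_{n+1}]$ with $\check\mu$ small enough, and rescales so that the ratio at $\rho_n$ matches; the slack between $\overline\mu$ and $\mu$ absorbs the correction factor and keeps $\widetilde S$ within $[(1-\mu)S,S]$.

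The paper avoids the first difficulty altogether by not attempting $(i)\Rightarrow(ii)$. Instead it proves $(i)\Rightarrow(iii)$ directly, redoing the inductive construction of Proposition~1 in \cite{GRS10} under the weaker hypothesis $(NOIA)$. The only place where \cite{GRS10} used $(NOA)$ was to rule out $\p[\overline A_1^+]=1$ (or $\p[\overline A_1^-]=1$) for the first exit from the band $[(1+\mu)^{-1}S_0,(1+\mu)S_0]$. Under $(NOIA)$ this can happen; the paper then introduces an intermediate level $\gamma\in(\beta,1)$, where $\beta$ is the essential infimum of $\min_{0\le t\le\bar\varrho_1}S_t/S_0$, and uses $(NOIA)$ to show $\lim_{\gamma\searrow\beta}\p[\overline A_1^{\gamma,-}]=0$, so that one can choose $\gamma$ with $0<\p[\overline A_1^{\gamma,-}]<\tfrac12$ and continue the construction as before. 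This is constructive and sidesteps the stopping-time obstruction entirely.
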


\begin{proof}
 Obviously, we have $(ii) \Rightarrow(i)$. The equivalent $(ii)\Leftrightarrow(iii)$ follows directly from Theorem 1 of \cite{GRS10}. As explained above, $(iv)$ implies $(iii)$. 
 
 \vspace{3mm}
 
 The converse $(iii)\Rightarrow(iv)$ follows by exploiting that $(iii)$ asserts locally the existence of a $\mu$-consistent price system for \emph{each} $0<\mu<1$. Indeed, fix $0<\mu<1$ and a localising sequence $(\tau_n)_{n=1}^\infty$ of stopping times. Let $\overline{Z}=(\overline{Z}^0_{\tau_n\wedge t},\overline{Z}^1_{\tau_n\wedge t})_{0 \leq t \leq T}$ be a $\overline{\mu}$-consistent price system for $S^{\tau_n}=(S_{\tau_n\wedge t})_{0\leq t\leq T}$ with $0<\overline{\mu}<\mu$. Then we can extend $\overline{Z}$ to a $\widetilde{\mu}$-consistent price system $\widetilde{Z}=(\widetilde{Z}^0_{\tau_{n+1}\wedge t},\widetilde{Z}^1_{\tau_{n+1}\wedge t})_{0 \leq t \leq T}$ for $S^{\tau_{n+1}}=(S_{\tau_{n+1}\wedge t})_{0\leq t\leq T}$ with $0<\overline{\mu}<\widetilde{\mu}<\mu$ by setting 
\begin{equation*}
\widetilde{Z}^0_t= 
\begin{cases}
\overline{Z}^0_t&: 0 \leq t < \tau_{n},\\
\check{Z}^0_{\tau_{n+1}\wedge t} \frac{\overline{Z}^0_{\tau_{n}}}{\check{Z}^0_{\tau_{n}} }&: \tau_{n} \leq t \leq T,  \\
\end{cases}
\end{equation*}
\begin{equation*}
\widetilde{Z}^1_t= 
\begin{cases}
(1-\check\mu)\overline{Z}^1_t&: 0 \leq t < \tau_{n},\\
(1-\check\mu)\check{Z}^1_{\tau_{n+1}\wedge t} \frac{\overline{Z}^1_{\tau_{n}}}{\check{Z}^1_{\tau_{n}} }&: \tau_{n} \leq t \leq T,  \\
\end{cases}
\end{equation*}
where $\check{Z}=(\check{Z}^0_{\tau_{n+1}\wedge t},\check{Z}^1_{\tau_{n+1}\wedge t})_{0 \leq t \leq T}$ is a $\check{\mu}$-consistent price system for $S^{\tau_{n+1}}=(S_{\tau_{n+1}\wedge t})_{0\leq t\leq T}$ with $0<\check{\mu}<\frac{\widetilde{\mu}-\overline{\mu}}{2}$. Repeating this extension allows us to establish the existence of a $\mu$-consistent local martingale deflator. 
 
 \vspace{3mm}
 
 \noindent $(i)\Rightarrow(iii)$: As $(iii)$ is a local property, we may assume that $S$ satisfies $(NOIA)$. 
 
 To prove $(iii)$, we do a similar construction as in the proof of Proposition 1 in \cite{GRS10}. 
 We suppose in the sequel that the reader is familiar with the aforementioned proof. 
 
 Define the stopping time $\bar{\varrho}_1$ by
  $$ \bar{\varrho}_1 :=\inf\left\{t>0\,\Bigg|\,\frac{S_t}{S_0} \geq 1+\mu \,\mbox{ or }\, \frac{S_t}{S_0} \leq \frac{1}{1+\mu}\right\}. $$
 Define the sets $\overline{A}_1^+$, $\overline{A}_1^-$ and $\overline{A}_1^0$ as 
   \begin{equation*}
    \begin{aligned}
      \overline{A}_1^+ &:=\big\{\bar{\varrho}_1<\infty, \, S_{\bar{\varrho}_1}=(1+\mu)S_0\big\}, \\
      \overline{A}_1^- &:=\left\{\bar{\varrho}_1<\infty, \, S_{\bar{\varrho}_1}=\frac{S_0}{(1+\mu)}\right\}, \\
      \overline{A}_1^0 &:=\big\{\bar{\varrho}_1=\infty\big\}.
    \end{aligned}
   \end{equation*}
 
 It was observed in \cite{GRS10} that the assumption $(NOA)$ rules out the case $\p\big[\overline{A}_1^+\big]=1$ and $\p\big[\overline{A}_1^-\big]=1$. 
 But under the present weaker assumption $(NOIA)$ we cannot a priori exclude the above possibilities. 
 To refine the argument from \cite{GRS10} in order to apply to the present setting, we distinguish two cases. 
 Either we have $\p\big[\overline{A}_1^+\big]<1$ and $\p\big[\overline{A}_1^-\big]<1$, or one of the probabilities $\p\big[\overline{A}_1^+\big]$ or $\p\big[\overline{A}_1^-\big]$ equals one. 
 
 In the first case, we let $\varrho_1:=\bar{\varrho}_1$ and proceed exactly as in the proof of Proposition 1 in \cite{GRS10} to complete the first inductive step. 
 
 For the second case, we assume w.l.o.g.~that $\p\big[\overline{A}_1^+\big]=1$, the other case can be treated in an analogous way. 
 
 Define the real number $\beta \leq 1$ as the essential infimum of the random variable $\min_{0\leq t\leq\bar{\varrho}_1} \frac{S_t}{S_0}$.
 We must have $\beta <1$, otherwise the pair $(0,\bar{\varrho}_1)$ would define an immediate obvious arbitrage. 
 We also have the obvious inequality $\beta \geq \frac{1}{1+\mu}.$
 
 We define for $1>\gamma\geq\beta$ the stopping time 
  $$ \bar{\varrho}^\gamma_1:=\inf\left\{t>0 \, \Bigg| \, \frac{S_t}{S_0} \geq 1+\mu \, \mbox{ or } \, \frac{S_t}{S_0} \leq \gamma\right\}. $$
 Defining 
  \begin{equation*}
   \begin{aligned}
    \overline{A}^{\gamma,+}_1 := \big\{S_{\bar{\varrho}_1^\gamma}=(1+\mu)S_0 \big\} \quad \mbox{ and } \quad
    \overline{A}^{\gamma,-}_1 := \left\{S_{\bar{\varrho}_1^\gamma} = \gamma S_0\right\},
   \end{aligned}
  \end{equation*}
  we find an almost surely partition of $\overline{A}^+_1$ into the sets $\overline{A}^{\gamma,+}_1$ and $\overline{A}^{\gamma,-}_1$. 
 Clearly $\p[\overline{A}^{\gamma,-}_1]>0$, for $1>\gamma >\beta$. 
 We claim that $$ \lim\limits_{\gamma\searrow\beta}\p\big[\overline{A}^{\gamma,-}_1\big]=0. $$
 Indeed, supposing that this limit were positive, we again could find an obvious immediate arbitrage, as in this case we have that $\p\big[\overline{A}^{\beta,-}_1\big]>0$. 
 Hence, the pair of stopping times
  \begin{equation*}
    \begin{aligned}
      \sigma &= \bar{\varrho}^\beta_1\mathbbm{1}_{\big\{S_{\bar{\varrho}^\beta_1}=\beta S_0\big\}} +\infty\mathbbm{1}_{\big\{S_{\bar{\varrho}^\beta_1}=(1+\mu)S_0\big\}} \\
      \tau &= \bar{\varrho}_1\mathbbm{1}_{\big\{S_{\bar{\varrho}^\beta_1}=\beta S_0\big\}} +\infty\mathbbm{1}_{\big\{S_{\bar{\varrho}^\beta_1}=(1+\mu)S_0\big\}}
    \end{aligned}
  \end{equation*}
  would define an obvious immediate arbitrage, which is contrary to our assumption. 
  
 We thus may find $1>\gamma >\beta$ such that
\be
0<\p\big[\overline{A}^{\gamma,-}_1\big]<\frac{1}{2}.\label{W1}
\ee
 After having found this value of $\gamma$ we can define the stopping time $\varrho_1$ in its final form as
   $$ \varrho_1:=\bar{\varrho}_1^\gamma. $$
 
 Next, we define the sets
  \begin{equation} \label{W2}
   \begin{aligned}
     & A^+_1 := \{\varrho_1 <\i,\, S_{\varrho_1} =(1+\mu) S_0\}=\overline{A}^{\gamma,+}_1  \\
     & A^-_1 := \{\varrho_1 <\i,\, S_{\varrho_1} =\gamma S_0\}=\overline{A}^{\gamma,-}_1
   \end{aligned}
  \end{equation}
  to obtain a partition of $\Omega$ into two sets of positive measure. 

 As in the proof of Proposition 1 in \cite{GRS10}, we define a probability measure $\Q_1$ on $\F_{\varrho_1}$ by letting $\frac{d\Q_1}{d\p}$ be constant on these two sets, 
  where the constants are chosen such that
    $$ \Q_1[A^+_1] = \frac{1-\gamma}{1+\mu-\gamma} \quad \mbox{ and } \quad \Q_1[A^-_1] = \frac{\mu}{1+\mu-\gamma}. $$ 
 We then may define the $\Q_1$-martingale $(\widetilde{S}_t)_{0 \leq t \leq \varrho_1}$ by 
    $$ \widetilde{S}_t := E_{\Q_1} [S_{\varrho_1}| \F_t], \quad \quad 0\leq t \leq\varrho_1, $$
   to obtain a process remaining in the interval $[\gamma S_0, (1+\mu)S_0].$ 
   
 The above weights for $\Q_1$ were chosen in such a way to obtain
    $$ \widetilde{S}_0 = E_{\Q_1}[S_{\varrho_1}]=S_0. $$
    
 This completes the first inductive step similarly as in the proof of Proposition 1 of \cite{GRS10}. 
 Summing up, we obtained $\varrho_1$, $\Q_1$ and $(\widetilde{S}_t)_{0\leq t\leq\varrho_1}$ precisely as in the proof of Proposition 1 in \cite{GRS10} with the following additional possibility: it may happen that $\varrho_1$ does not stop when $S_t$ first hits $(1+\mu)S_0$ or $\frac{S_0}{1+\mu}$, but rather when $S_t$ first hits $(1+\mu)S_0$ or $\gamma S_0$, for some $\frac{1}{1+\mu} < \gamma < 1.$ In the case we have $\p[A^0_1]=0$, we made sure that $\p[A^-_1] < \frac{1}{2}$, i.e., we have a control on the probability of $\{S_{\varrho_1}=\gamma S_0\}.$
 
 \vskip10pt
 
 We now proceed as in the proof of Proposition 1 in \cite{GRS10} with the inductive construction of $\varrho_n, \Q_n$ and $\big(\widetilde{S}_t\big)_{0\leq t\leq\varrho_n}$. 
 The new ingredient is that again we have to take care (conditionally on $\F_{\varrho_{n-1}}$) of the additional possibility $\p[A^+_n]=1$ or $\p[A^-_n]=1$. 
 Supposing again w.l.o.g.~that we have the first case, we deal with this possibility precisely as for $n=1$ above, but now we make sure that $\p[A^-_n]<2^{-n}$ instead of $\p[A^-_1]=\p\big[\overline{A}^{\gamma,-}_1\big]<\frac{1}{2}$ as in \eqref{W1} and \eqref{W2} above. 
 
 This completes the inductive step and we obtain, for each $n\in \mathbb{N}$, an equivalent probability measure $\Q_n$ on $\F_{\varrho_n}$ and a $\Q_n$-martingale $\big(\widetilde{S}_t\big)_{0 \leq t \leq \varrho_n}$ taking values in the bid-ask spread $([\frac{1}{1+\mu}S_t,(1+\mu)S_t])_{0 \leq t \leq \varrho_n}.$ 
 We note in passing that there is no loss of generality in having chosen this normalization of the bid-ask spread instead of the usual normalization $[(1-\mu')S', S']$ by passing from $S$ to $S'=(1+\mu)S$ and from $\mu$ to $\mu'=1-\frac{1}{(1+\mu)^2}.$ 
 
 There is one more thing to check to complete the proof of $(iii):$ we have to show that the stopping times $(\varrho_n)^\infty_{n=1}$ increase almost surely to infinity. 
 This is verified in the following way: suppose that $(\varrho_n)^\infty_{n=1}$ remains bounded on a set of positive probability. 
 On this set we must have that $\frac{S_{\varrho_{n+1}}}{S_{\varrho_n}}$ equals $(1+\mu)$ or $\frac{1}{1+\mu}$, except for possibly finitely many $n's$. 
 Indeed, the above requirement $\p[A^-_n]< 2^{-n}$ makes sure that a.s.~the novel possibility of moving by a value different from $(1+\mu)$ or $\frac{1}{1+\mu}$ can only happen finitely many times. 
 Therefore we may, as before Proposition 1 in \cite{GRS10}, conclude from the uniform continuity and strict positivity of the trajectories of $S$ on $[0,T]$ that $\varrho_n$ increases a.s.~to infinity which completes the proof of $(iii)$. 
\end{proof}

\section{Fluctuations of fractional Brownian motion}\label{sec:fl}

In this section, we establish a control on the number of fluctuations of fractional Brownian motion. It allows us to establish the finiteness of the indirect utility for the proof of Theorem \ref{t2} but might also be interesting in its own right.

Let $B^H=(B^{H}_t)_{t \geq 0}$ be a standard fractional Brownian motion with Hurst parameter $H \in (0, 1]$. Fix $\delta > 0$ and define the \emph{$\delta$-fluctuation times} $(\tau_j)_{j \geq 0}$ of $B^H$ inductively by $\tau_0 \equiv 0$ and
\[
\tau_{j + 1} (\omega) \coloneqq \inf \big\{t \geq \tau_j(\omega)~\big|~ \abs{ B^H_t(\omega)-B^H_{\tau_j}(\omega)} \geq \delta\big\}
.\]
The \emph{number of $\delta$-fluctuations up to time  $T \in (0, \infty)$} is the random variable
\[
F^{(\delta)}_T (\omega) \coloneqq \sup \{j \geq 0~|~\tau_j (\omega) \leq T\} \label{dfluB}
.\]
The goal of this section is to prove the following proposition.
\begin{proposition}\label{ppn:fluctuations}
With the notation above, there exist finite positive constants $C = C (H),\allowbreak C' = C' (H)$ only depending on $H$ such that
\[
\p \big[F^{(\delta)}_T \geq n\big] \leq C' \exp \big(-C^{-1} \delta^2 T^{-2 H} n^{1 + (2 H \wedge 1)}\big)\quad\text{for all $ n \in \N$}\label{fluc}
.\]
\end{proposition}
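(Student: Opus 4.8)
The strategy is to control $\p[F^{(\delta)}_T \geq n]$ by the probability that, on a grid of $n$ time points of mesh $T/n$, the fractional Brownian motion has moved by at least $\delta$ on \emph{most} of the $n$ subintervals. Here is the point: if there are $n$ $\delta$-fluctuation times in $[0,T]$, then $\tau_n \leq T$, so the path of $B^H$ has accumulated $n$ disjoint excursions of size $\delta$. Partition $[0,T]$ into $m$ equal blocks $I_1,\dots,I_m$ of length $T/m$ (with $m$ a suitable fraction of $n$, say $m=n/2$). Since the $\tau_j$ are increasing and each consecutive pair differs by a genuine $\delta$-fluctuation, by a pigeonhole argument at least $m-$ (a constant fraction) of the blocks $I_k$ must each contain at least one full $\delta$-fluctuation, hence satisfy $\mathrm{osc}_{I_k} B^H \geq \delta/2$, say. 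The key probabilistic input is then: $\p[\mathrm{osc}_{I_k} B^H \geq \delta/2] \leq q$ for a small $q=q(\delta,T/m,H)$, and these events are \emph{not independent}, so one must invoke a Gaussian correlation/Slepian-type or Markov-property-like argument to decouple them.

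The cleanest route around the dependence is to use the conditional-variance structure of fractional Brownian motion. First I would reduce to the increments $X_k := B^H_{kT/m} - B^H_{(k-1)T/m}$, each $\mathcal{N}(0,(T/m)^{2H})$, because $\mathrm{osc}_{I_k} B^H \geq \delta/2$ forces (after a maximal-inequality step, e.g. Borell--TIS or Fernique, to pass from oscillation to endpoint increment up to constants) $\abs{X_k} \gtrsim \delta$ on many $k$'s. Then I would bound, for any fixed subset $J\subseteq\{1,\dots,m\}$ with $\abs{J}\asymp m$,
\[
\p\big[\abs{X_k}\geq c\delta \text{ for all } k\in J\big] \leq \prod_{k\in J}\sup_{x}\p\big[\abs{X_k}\geq c\delta \,\big|\, (X_j)_{j\in J,\, j<k}=x\big].
\]
The conditional law of $X_k$ given the past is Gaussian with some conditional mean (a linear functional of the past) and a conditional variance $\sigma_k^2$ that is \emph{at most} the unconditional variance $(T/m)^{2H}$ — but one needs a lower bound on $\sigma_k^2$ too, or rather an upper bound on the conditional density, which is exactly $\p[\abs{X_k}\geq c\delta\mid \text{past}] \leq \exp(-c'\delta^2/\sigma_k^2)$ only when the conditional mean is controlled; in general one uses $\p[\abs{X_k}\geq c\delta\mid\text{past}]\leq 2\p[|Z|\geq c\delta/\sigma_k]$ is \emph{false} when the mean drifts, so instead one bounds the conditional density uniformly: a centered-or-not Gaussian with variance $\sigma_k^2 \leq (T/m)^{2H}$ has density bounded by $(2\pi\sigma_k^2)^{-1/2}$, giving $\p[\abs{X_k - \mathrm{any}}\geq c\delta]\leq$ (not small). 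This is where the real work lies.

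The genuinely correct and standard tool here is the one already cited in the introduction: \emph{Peyre's result} \cite{Pey15} (a law of iterated logarithm / fluctuation estimate for fBm at stopping times), together with the \emph{local nondeterminism} property of fractional Brownian motion, which states that the conditional variance of $B^H_{t} - B^H_s$ given $\mathcal{F}_s^{\mathrm{past}}$ (generated by the whole path before $s$) is bounded \emph{below} by $\kappa (t-s)^{2H}$ for a constant $\kappa=\kappa(H)>0$. Using local nondeterminism I would get the genuine decoupling bound $\p[\mathrm{osc}_{I_k}B^H \geq \delta/2 \mid \mathcal{F}_{(k-1)T/m}] \leq C'\exp(-C^{-1}\delta^2 (T/m)^{-2H})$ \emph{uniformly}, and then multiply over the $\asymp m$ blocks in $J$, and finally sum over the $\binom{m}{\asymp m} \leq 2^m$ choices of $J$. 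This yields
\[
\p[F^{(\delta)}_T \geq n] \leq 2^m \cdot \binom{m}{\lfloor m/2\rfloor}^{-\text{?}}\!\!\big(C'\exp(-C^{-1}\delta^2(m/T)^{2H})\big)^{cm} \leq C'' \exp\big(c m \log 2 - c' m \,\delta^2 (m/T)^{2H}\big),
\]
and with $m \asymp n$ the dominant term is $-c'\delta^2 T^{-2H} n^{1+2H}$, matching \eqref{fluc} when $2H\leq 1$; when $2H>1$ one uses the trivial bound $(2H\wedge 1)$ since the extra smoothness only helps up to the exponent being capped (the combinatorial factor $e^{cm}$ must not swamp the exponential, which forces the cap, and also forces choosing $m$ a small enough constant multiple of $n$, or equivalently requiring $\delta^2 (n/T)^{2H}$ to dominate — for small $\delta$ one instead subdivides into fewer, coarser blocks, i.e. takes $m = m(\delta,n)$ optimally, which refines the constant).

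\textbf{Main obstacle.} The crux is obtaining the \emph{uniform-in-the-past} exponential tail bound $\p[\mathrm{osc}_{[s,t]}B^H \geq \delta \mid \mathcal{F}_s] \leq C'\exp(-C^{-1}\delta^2(t-s)^{-2H})$; this is precisely where the non-Markov, non-semimartingale nature of fBm bites, and it must be extracted either from the local nondeterminism property (Berman-type estimates, Pitt) combined with the Borell--TIS inequality for the sup over the interval, or directly from \cite{Pey15}. Once that conditional estimate is in hand, the rest — pigeonhole to produce $\asymp n$ good blocks, a union bound over $\binom{m}{\cdot}$ block-patterns, and optimizing the number of blocks — is bookkeeping, and the appearance of the exponent $1+(2H\wedge 1)$ rather than $1+2H$ is exactly the signature of the competition between the combinatorial $e^{cm}$ factor and the Gaussian $e^{-c\delta^2(m/T)^{2H}}$ factor, resolved by choosing $m$ proportional to a power of $n$ dictated by $\delta$.
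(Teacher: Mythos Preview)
Your overall architecture---partition $[0,T]$ into blocks, pigeonhole the $\delta$-fluctuations into blocks, and pay a combinatorial factor for the block pattern---is indeed the skeleton of the paper's argument. The gap is in the decoupling step.

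The uniform conditional estimate you rely on,
\[
\p\big[\mathrm{osc}_{[s,t]} B^H \geq \delta \;\big|\; \mathcal{F}_s\big] \leq C'\exp\big(-C^{-1}\delta^2 (t-s)^{-2H}\big)\quad\text{a.s.},
\]
is \emph{false} for fractional Brownian motion. Conditionally on $\mathcal{F}_s$, the process $(B^H_u-B^H_s)_{u\geq s}$ is Gaussian with a random mean function $u\mapsto\E[B^H_u-B^H_s\mid\mathcal{F}_s]$ that is an unbounded linear functional of the past; on events where this conditional drift is large over $[s,t]$, the conditional probability of a large oscillation is close to $1$, not exponentially small. Local nondeterminism does not help here: it gives a \emph{lower} bound on the conditional variance, which is an anti-concentration statement (bounded density, small-ball estimates), exactly the wrong direction for the tail bound you need. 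Peyre's result in \cite{Pey15} is a law of the iterated logarithm at stopping times and likewise does not yield a uniform conditional tail. A symptom that something is wrong: if your conditional bound held, multiplying over $m\asymp n$ blocks would give exponent $n^{1+2H}$ for \emph{all} $H$, strictly stronger than the claimed $n^{1+(2H\wedge 1)}$ when $H>1/2$.

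The paper avoids conditioning altogether. After discretising (with $m\asymp n^{1/H}$ blocks so that Fernique controls within-block oscillation), it reduces the event $\bigcap_j\{|\Delta_j|\geq\delta/2\}$ to a union over $2^n$ sign choices of the event $\{\sum_j\varepsilon_j\Delta_j\geq n\delta/2\}$, where $\Delta_j=B^H_{t_{k_j}}-B^H_{t_{k_{j-1}}}$. This is a \emph{single} centred Gaussian, and its variance is bounded by $\sum_{j,j'}|\Cov(\Delta_j,\Delta_{j'})|$. The case split on $H$ enters here, not through any combinatorial competition: for $H\geq 1/2$ all covariances are nonnegative and the sum telescopes to $\leq T^{2H}$; for $H<1/2$ the off-diagonal covariances are nonpositive and a direct computation gives $\leq 2n^{1-2H}T^{2H}$. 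In both cases $\Var\leq 2T^{2H}n^{(1-2H)_+}$, whence the Gaussian tail yields exactly the exponent $n^{1+(2H\wedge 1)}$. The factors $2^n\binom{m}{n}$ are then absorbed by choosing $m$ as above. So the cap at $2H\wedge 1$ is a feature of the covariance structure of fBm increments (positive vs.\ negative correlation), not of the combinatorics.
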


The interest of Proposition~\ref{ppn:fluctuations} for this article lies in the following corollary.
\begin{corollary}\label{cor:fluc}
With the above notation, the random variable $F^{(\delta)}_T$ does have exponential moments of all orders, that is,
\[
\E \brack[\big]{\exp \big(a F^{(\delta)}_T\big)} < \infty\quad\text{for all $a\in\R$}\label{cor:fluc:eq1}
.\]

Moreover, if $H \geq 1/2$, this random variable even has a Gaussian moment, that is, there exists $a>0$ such that
\[
\E \brack[\big]{\exp \big(a (F^{(\delta)}_T)^2\big)} < \infty\label{cor:fluc:eq2}
.\]
\end{corollary}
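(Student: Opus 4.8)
The plan is to derive both estimates as elementary consequences of the tail bound \eqref{fluc}. First I would record that $F^{(\delta)}_T$ takes values in $\{0,1,2,\dots\}\cup\{+\infty\}$, and that \eqref{fluc} gives $\p\big[F^{(\delta)}_T=+\infty\big]=\lim_{n\to\infty}\p\big[F^{(\delta)}_T\geq n\big]=0$, so $F^{(\delta)}_T$ is $\p$-a.s.\ a finite non-negative integer. Consequently, for any non-decreasing $\phi\colon\{0,1,2,\dots\}\to[0,\infty)$, writing the expectation as a sum over the values of $F^{(\delta)}_T$ and using $\p[F^{(\delta)}_T=n]\leq\p[F^{(\delta)}_T\geq n]$ yields
\[
\E\big[\phi\big(F^{(\delta)}_T\big)\big]=\sum_{n\geq 0}\phi(n)\,\p\big[F^{(\delta)}_T=n\big]\leq\sum_{n\geq 0}\phi(n)\,\p\big[F^{(\delta)}_T\geq n\big],
\]
so in both parts it suffices to insert \eqref{fluc} into the right-hand side and verify convergence of the resulting numerical series.

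For \eqref{cor:fluc:eq1}, the case $a\leq 0$ is trivial since then $\exp\big(aF^{(\delta)}_T\big)\leq 1$. For $a>0$ I would take $\phi(n)=e^{an}$ above and use \eqref{fluc} to bound $\E\big[\exp\big(aF^{(\delta)}_T\big)\big]$ by $1+C'\sum_{n\geq 1}\exp\big(an-C^{-1}\delta^2T^{-2H}n^{1+(2H\wedge 1)}\big)$. Because $H>0$ forces $1+(2H\wedge 1)>1$, the exponent behaves like $an-(\text{const})\,n^{1+(2H\wedge 1)}\to-\infty$; the summand is eventually dominated by $e^{-n}$, the series converges, and \eqref{cor:fluc:eq1} follows.

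For \eqref{cor:fluc:eq2}, the assumption $H\geq 1/2$ gives $2H\wedge 1=1$, so \eqref{fluc} reads $\p\big[F^{(\delta)}_T\geq n\big]\leq C'\exp(-bn^2)$ with $b:=C^{-1}\delta^2T^{-2H}>0$. Choosing $0<a<b$ and applying the bound above with $\phi(n)=e^{an^2}$ bounds $\E\big[\exp\big(a(F^{(\delta)}_T)^2\big)\big]$ by $1+C'\sum_{n\geq 1}e^{(a-b)n^2}$, which converges; hence \eqref{cor:fluc:eq2} holds for every $a\in\big(0,C^{-1}\delta^2T^{-2H}\big)$.

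There is essentially no obstacle here: granted Proposition~\ref{ppn:fluctuations}, the corollary is a routine tail-integration. The only points worth a line are the a.s.\ finiteness and integer-valuedness of $F^{(\delta)}_T$ (immediate from \eqref{fluc}) and the observation that a superlinear negative exponent swamps a linear positive one (respectively, that a quadratic positive exponent can be absorbed into a strictly larger quadratic negative one). All the genuine work lies in Proposition~\ref{ppn:fluctuations} itself.
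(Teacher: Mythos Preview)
Your proof is correct and follows essentially the same approach as the paper: both derive the corollary directly from the tail bound \eqref{fluc} by checking convergence of the relevant series/integral. The only cosmetic difference is that the paper uses the continuous layer-cake formula $\E[f(F^{(\delta)}_T)]=\int_0^\infty f'(x)\,\p[F^{(\delta)}_T\geq x]\,dx$ via Fubini, whereas you work with the discrete sum $\sum_{n\geq 0}\phi(n)\,\p[F^{(\delta)}_T\geq n]$; since $F^{(\delta)}_T$ is integer-valued, these are equivalent, and your version is arguably a bit more explicit.
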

\bp[\proofname\ of Corollary \ref{cor:fluc}]
For $f(x)=\exp(ax)$ and $f(x)=\exp(ax^2)$, we have
$$\E\big[f(F^{(\delta)}_T)\big]=\int_0^\infty f'(x)\p\big[F^{(\delta)}_T\geq x\big]dx$$
by Fubini's Theorem. Combining this with the estimate \eqref{fluc} gives \eqref{cor:fluc:eq1}. For  \eqref{cor:fluc:eq2}, we have to use completion of squares in addition.
\ep

\begin{proof}[\proofname\ of Proposition~\ref{ppn:fluctuations}]
Throughout the proof, we denote by $C, C'>0$ constants only depending on $H$, but whose precise value may vary from appearance to appearance. 

Let $n,m \in \N$ be such that $m> n$. We then divide $[0,T]$ into $m$ subintervals $I_k:=[\frac{k-1}{m}T,\frac{k}{m}T]$ for $k=1,\ldots,m$ and denote their midpoints by $t_k:=\frac{k-\frac{1}{2}}{m}T$.  By Fernique's Theorem (see, e.g., Lemma 4.2 of \cite{Pey15}), we can estimate the probability of the set $
A_1:=\bigcup_{k=1}^m\big\{|B^H_t-B^H_{t_k}|>\textstyle\frac{\delta}{4}\ \text{for $t\in I_k$}\big\}
$
by 
\be  \label{P(A1)estimate}
\p[A_1]\leq m \p\big[\big|B^H_t-B^H_{t_k}\big|>\textstyle\frac{\delta}{4}\ \text{for $t\in I_k$}\big]\leq m C'\exp\big(-C^{-1}\delta^2T^{-2H\wedge 1}m^{2H\wedge 1}\big).
\ee
On the complement $A^c_1$ of $A_1$, we then have that 
\be
\sup_{t\in I_k}\big|B^H_t-B^H_{t_k}\big|\leq\textstyle\frac{\delta}{4}\ \text{for all $k=1,\ldots,m$.}\label{C1}
\ee

Suppose now that $F^{(\delta)}_T(\om)\geq n$. Then there have to be at least $n+1$ ``random indices'' $0=K_0(\om)<K_1(\om)<\ldots<K_n(\om)\leq m$ such that $\tau_j(\om)\in I_{K_j(\om)}$ for $j=1,\ldots,n$. Because of \eqref{C1} and $|B^H_{\tau_j}-B^H_{\tau_{j-1}}|=\delta$, we then must have 
$$\big|B^H_{t_{K_j}}-B^H_{t_{K_{j-1}}}\big|\geq\frac{\delta}{2}$$
for $j=1,\ldots,n$ on $\{F^{(\delta)}_T\geq n\}\cap A_1^c$.

In order to estimate $\p\big[\{F^{(\delta)}_T\geq n\}\cap A_1^c\big]$, it therefore only remains to bound the probability of the event 
$$A_2:=\bigcap_{j=1}^n\big\{|B^H_{t_{K_j}}-B^H_{t_{K_{j-1}}}|\geq\textstyle\frac{\delta}{2}\big\}.$$
But the event $A_2$ depends on the realisation of the ``random indices'' $0=K_0(\om)<K_1(\om)<\ldots<K_n(\om)\leq m$. To get rid of this dependence, we simply estimate the probability of the event
$$A_3:=\bigcap_{j=1}^n\big\{|B^H_{t_{k_j}}-B^H_{t_{k_{j-1}}}|\geq\textstyle\frac{\delta}{2}\big\}$$
for all $\binom{m}{n}$ possible realisations $0=k_0<k_1<\ldots<k_n\leq m$ of the ``random indices'' $0=K_0(\om)<K_1(\om)<\ldots<K_n(\om)\leq m$.

For this, fix an arbitrary realisation of indices $0=k_0<k_1<\ldots<k_n\leq m$ and set 
$$\Delta_j:=B^H_{t_{k_j}} - B^H_{t_{k_{j - 1}}}\quad\text{for $j=1,\ldots,m.$}$$
Then
$$A_3=\bigcap_{j=1}^n\big\{|\Delta_j|\geq {\textstyle \frac{\delta}{2}}\big\} =\bigcap_{j=1}^n \big\{\sign(\Delta_j)\Delta_j\geq \textstyle\frac{\delta}{2}\big\}\subseteq\big\{\sum_{j=1}^n\sign(\Delta_j)\Delta_j\geq n\textstyle\frac{\delta}{2}\big\}$$
so that 
$$A_3\subseteq \bigcup_{j=1}^n\bigcup_{\ve_j\in\{-1,+1\}}\left\{\sum_{j=1}^n\ve_j\Delta_j\geq\textstyle n \frac{\delta}{2}\right\}.$$

For fixed $\ve_j\in\{-1,+1\}$, where $j=1,\ldots,n$, we have that $\sum_{j=1}^n\ve_j\Delta_j$ is a centred normally distributed random variable with variance
\[
\Var\left(\sum_{j=1}^n\ve_j\Delta_j\right)=\sum_{j, j' = 1}^{n} \ve_j \ve_{j'} \Cov (\Delta_j, \Delta_{j'}) \leq\sum_{j, j' = 1}^{n}\abs{\Cov (\Delta_j, \Delta_{j'})}\label{C2}
.\]

To estimate \eqref{C2}, we distinguish the following two cases:
\bi
\item[(i)] $H\geq \frac{1}{2}$.
\item[(ii)] $H< \frac{1}{2}$.
\ei

If $H \geq 1/2$, the covariance $\Cov (\Delta_j, \Delta_{j'})$ is always non-negative, so that
    \begin{equation*}
     \sum_{j, j' = 1}^{n} \abs{\Cov (\Delta_j, \Delta_{j'})} = \Var \left({\sum_{j = 1}^{n} \Delta_i}\right)
                                                      = \Var (B^H_{t_{k_{n}}} - B^H_{t_{k_{0}}}) = \abs{t_{k_n} - t_{k_0}}^{2H} \leq T^{2 H}.
    \end{equation*}

If $H < 1/2$, the covariance $\Cov (\Delta_j, \Delta_{j'})$ is non-positive as soon as $j \neq j'$, so that
    \begin{equation*}
      \begin{aligned}
        \sum_{j' = 0}^{n - 1} \abs{\Cov (\Delta_j, \Delta_{j'})} &= \Var (\Delta_j) - \Cov \paren[\Big]{\Delta_j, \sum_{j' < j} \Delta_{j'}} - \Cov \paren[\Big]{\Delta_j, \sum_{j' > j} \Delta_{j'}} \\
                                                                 &= \Var (\Delta_j) - \Cov (\Delta_j, B^H_{t_{k_{j-1}}}-B^H_{t_{k_{0}}}) - \Cov (\Delta_j, B^H_{t_{k_{n}}} - B^H_{t_{k_{j}}}). 
      \end{aligned}
    \end{equation*}
   For $0\leq t \leq u \leq v\leq T$, it follows from the definition of fractional Brownian motion that
    \begin{equation*}
      -\Cov (B_t - B_u, B_u - B_v) = \tfrac{1}{2} (\abs{t - u}^{2 H} + \abs{u - v}^{2 H} - \abs{t - v}^{2 H}) \leq \tfrac{1}{2} \abs{t - u}^{2 H}.
    \end{equation*}
    Therefore, we have that
    \begin{equation*}
      \begin{aligned}
        \sum_{j' = 1}^{n } \abs{\Cov (\Delta_j, \Delta_{j'})} &\leq \abs{t_{k_j} - t_{k_{j - 1}}}^{2 H} + \tsfrac{1}{2} \abs{t_{k_j} - t_{k_{j - 1}}}^{2 H} + \tsfrac{1}{2} \abs{t_{k_j} - t_{k_{j - 1}}}^{2 H}  \\
                                                                  &= 2\abs{t_{k_j} - t_{k_{j - 1}}}^{2 H}
      \end{aligned}
    \end{equation*}
    and thus
    \begin{equation} \label{eqn:bound-with-ti-for-small-H}
      \Var \left(\sum_{j = 1}^{n } \ve_j \Delta_j\right) \leq 2 \sum_{j = 1}^{n } \abs{t_{k_j} - t_{k_{j - 1}}}^{2 H}.
    \end{equation}
   But, since $H < 1/2$, the function $x \mapsto x^{2H}$ is concave, so that the right-hand side of \eqref{eqn:bound-with-ti-for-small-H} is bounded above by
    \begin{equation*}
     2 n \paren[\bigg]{\frac{\sum_{j = 1}^{n} \abs{t_{k_j} - t_{k_{j - 1}}}}{n}}^{2 H} = 2 n (\abs{t_{k_n} - t_{k_0}} \div n)^{2 H} \leq 2 n (T \div n)^{2H} = 2 n^{1 - 2 H} T^{2 H}.
    \end{equation*}

Hence in both case we can estimate
\[
\Var \paren[\Big]{\sum_{j = 1}^{n} \ve_j \Delta_j} \leq 2 T^{2 H} n^{(1 - 2 H)_+}.
\]
So, using the classical bound that $\p[Z \geq x] \leq e^{-x^2 / 2}$ for any standard normal distributed random variable $Z\sim \mathcal{N}(0,1)$, we have that
$$\p\left[\sum_{j=1}^n\ve_j\Delta_j\geq{\textstyle n \frac{\delta}{2}}\right]\leq\exp\left(-T^{-2 H} \delta^2 n^{1 + (2 H \wedge 1)} / 16\right) $$
for all possible $2^n$ choices of $\ve_j\in\{-1,+1\}$, where $j=1,\ldots,n$, and therefore
$$\p\big[A_3\big]\leq 2^n\exp\left(-T^{-2 H} \delta^2 n^{1 + (2 H \wedge 1)} / 16\right).$$

Combining that estimate with \eqref{P(A1)estimate}, and using that $\binom{m}{n} \leq m^{n} \div n!$, we finally get that
 \begin{equation*}
   \begin{aligned}
    \p \big[F^{(\delta)}_T \geq n\big] &\leq \p[A_1]+\p \big[\big\{F^{(\delta)}_T \geq n\big\}\cap A^c_1\big]\\
                    &\leq C' m \exp\big(-C^{-1} T^{-2 H} \delta^2 m^{2H\wedge 1}\big)+ \frac{2^n m^{n}}{n!} \exp \big(-T^{-2 H} \delta^2 n^{1 + (2 H \wedge 1)} / 16\big).
   \end{aligned}
 \end{equation*}
Now, it only remains to choose $m$ to be the smallest integer such that $m\geq n^{\frac{1}{H}}$ to obtain 
   $$\p\big[F^{(\delta)}_T \geq n\big]\leq C' \exp\left(-C^{-1} \delta^2 T^{-2 H} n^{1 + (2 H \wedge 1)}\right), $$
which completes the proof.
\end{proof}

\section{Proofs of the main results}\label{sec:pr}
\begin{proof}[Proof of Theorem \ref{t1}]
  Like in the proof of Theorem 3.2 in \cite{CSY15}, we show the existence of a shadow price by duality. 
  To that end, we observe that, for continuous price processes $S=(S_t)_{0\leq t\leq T}$, the condition $(TWC)$ of ``two way crossing'' implies the no obvious immediate arbitrage condition $(NOIA)$ locally. 
  It follows by part (iv) of Theorem \ref{localFTAP} that there exists a $\mu$-consistent local martingale deflator for $S$ for each $\mu\in(0,1)$. Therefore, the assumptions of Theorem \ref{dt} are satisfied and there exists an optimal trading strategy $\widehat{\varphi}=(\widehat{\varphi}^0_t,\widehat{\varphi}^1_t)_{0_-\leq t\leq T}$ that attains the supremum in \eqref{u(x)} as well as a supermartingale deflator $\widehat{Y}=(\widehat{Y}^0_t,\widehat{Y}^1_t)_{0\leq t\leq T}$ that solves the dual problem \eqref{D1}. 
  
  To obtain the existence of a shadow price $\widehat{S}=(\widehat{S}_t)_{0\leq t\leq T}$ for problem \eqref{u(x)} above (in the sense of Definition \ref{def:sp}), it is by Proposition 3.7 of \cite{CS14} sufficient to show that the dual optimiser $\widehat{Y}=(\widehat{Y}^0_t,\widehat{Y}^1_t)_{0\leq t\leq T}$ is a local martingale. 
  By Proposition 3.3 of \cite{CSY15}, this follows as soon as we have that the liquidation value
    \begin{equation*}
     V^{liq}_t(\hvp) := \hvp_t^0 + (\hvp_t^1)^+(1-\lambda)S_t - (\hvp^1_t)^-S_t
    \end{equation*}
   is strictly positive almost surely for all $t\in [0,T]$, i.e., 
    \begin{equation}  \label{p3}
       \inf_{0\leq t\leq T}V^{liq}_t(\widehat{\varphi}) >0, \quad a.s. 
    \end{equation}
   
  To show \eqref{p3}, we argue by contradiction. 
  Define 
    \begin{equation}  \label{N1a}
     \sigma_\varepsilon := \inf\left\{t\in[0,T]~\big|~ V^{liq}_t(\hvp)\leq \varepsilon\right\}, 
    \end{equation}
    and let $\sigma:=\lim_{\varepsilon\searrow 0}\sigma_{\varepsilon}$. 
  We have to show that $\sigma=\infty$, almost surely.     
  Suppose that $\p[\sigma<\infty]>0$ and let us work towards a contradiction. 

  First observe that $V^{liq}_\sigma(\hvp)=0$ on $\{\sigma<\infty\}$. 
  Indeed, as $\big(V^{liq}_t(\hvp)\big)_{0\leq t\leq T}$ is c\`adl\`ag, we have that 
    $$ 0\leq V^{liq}_\sigma(\hvp) \leq \lim_{\varepsilon\searrow 0}V^{liq}_{\sigma_\varepsilon}(\hvp)\leq 0 $$
   on the set $\{\sigma<\infty\}$.

  So suppose that $V^{liq}_\sigma(\hvp)=0$ on the set $\{\sigma<\infty\}$ with $\p[\sigma<\infty]>0$. 
  We may and do assume that $S$ ``moves immediately after $\sigma$'', i.e., $\sigma=\inf\{t>\sigma~|~S_t\ne S_{\sigma}\}$.
  Indeed, we may replace $\sigma$ on $\{\sigma<\infty\}$ by the stopping time $\sigma_+=\sigma_-$. 
  As $V^{liq}_T(\widehat{\varphi})>0$ a.s., we have $\sigma_+<T$ on $\{\sigma<\infty\}$. 

  We shall repeatedly use the fact established in Theorem \ref{dt} that the process 
    $$\widehat{V}=\big(\hvp^0_t\hY^0_t+\hvp^1_t\hY^1_t\big)_{0\leq t\leq T}$$
   is a uniformly integrable $\p$-martingale satisfying $\widehat{V}_T>0$ almost surely.

This implies that $\hvp^1_\sigma\ne 0$ a.s.~on $\{\sigma<\infty\}$. 
  Indeed, otherwise $\widehat{V}_\sigma=\widehat{Y}^0_\sigma V^{liq}_\sigma(\hvp)= 0$ on $\{\sigma<\infty\}$. 
  As $\widehat{V}$ is a uniformly integrable martingale with strictly positive terminal value $\widehat{V}_T>0$, we arrive at the desired contradiction.

  We consider here only the case that $\hvp^1_\sigma>0$ on $\{\sigma<\infty\}$ almost surely. 
  The case $\hvp^1_\sigma<0$ with strictly positive probability on $\{\sigma<\infty\}$ can be dealt with in an analogous way. 
  Next, we show that we cannot have $\hS_\sigma=(1-\lambda)S_\sigma$ with strictly positive probability on $\{\sigma<\infty\}$. 
  Indeed, this again would imply that $\widehat{V}_\sigma=\widehat{Y}^0_\sigma V^{liq}_\sigma(\hvp)= 0$ on this set which yields a contradiction as in the previous paragraph.

  Hence, we assume that $\hS_\sigma>(1-\lambda)S_\sigma$ on $\{\sigma<\infty\}$. 
  This implies by in part 4) of Theorem \ref{dt} that the utility-optimising agent defined by $\hvp$ cannot sell stock at time $\sigma$ as well as for some time after $\sigma$, as $S$ is continuous and $\widehat{S}$ c\`adl\`ag. 
  Note, however, that the optimising agent may very well buy stock. 
  But, we shall see that this is not to her advantage. 

  Define the stopping time $\varrho_n$ as the first time after $\sigma$ when one of the following events happens
    \begin{enumerate} [(i)]
      \item $\hS_t-(1-\lambda)S_t< \frac{1}{2}\big(\hS_\sigma-(1-\lambda)S_\sigma\big)$ or
      \item $S_t<S_\sigma-\frac{1}{n}$.
    \end{enumerate}
  By the hypothesis of $(TWC)$ of ``two way crossing'', we conclude that, a.s.~on $\{\sigma<\infty\}$, we have that $\varrho_n$ decreases to $\sigma$ and that 
    we have $S_{\varrho_n}=S_\sigma-\frac{1}{n}$, for $n$ large enough. 
  Choose $n$ large enough such that $S_{\varrho_n}=S_\sigma-\frac{1}{n}$ on a subset of $\{\sigma<\infty\}$ of positive measure. 
  Then $V^{liq}_{\varrho_n}(\hvp)$ is strictly negative on this set which will give the desired contradiction. 
  Indeed, the assumption $\hvp^1_\sigma>0$ implies that the agent suffers a strict loss from this position as $S_{\varrho_n}<S_\sigma$. 
  The condition (i) makes sure that the agent cannot have sold stock between $\sigma$ and $\varrho_n$.  
  The agent may have bought additional stock during the interval $\llbracket \sigma, \varrho_n\rrbracket$. 
  However, this cannot result in a positive effect either as the subsequent calculation, which holds true on $\{S_{\varrho_n}=S_\sigma-\frac{1}{n}\}$, reveals
   \begin{align*}
     V^{liq}_{\varrho_n}(\hvp) & =\hvp^0_{\varrho_n} + (1-\lambda)\hvp^1_{\varrho_n}S_{\varrho_n} \\
      &\leq \hvp^0_\sigma -\int_\sigma^{\varrho_n}S_ud\hvp_u^{1,\uparrow} + (1-\lambda)\left(\hvp_\sigma^1 + \int_\sigma^{\varrho_n}d\hvp_u^{1,\uparrow}\right)S_{\varrho_n} \\
      &= V^{liq}_{\sigma}(\hvp) + \hvp_{\sigma}^1(1-\lambda)\underbrace{(S_{\varrho_n}-S_\sigma)}_{=-\frac{1}{n}} - \int_\sigma^{\varrho_n} \underbrace{\big(S_u-(1-\lambda)S_{\varrho_n}\big)}_{\geq S_u-S_{\varrho_n}\geq 0} d\hvp^{1,\uparrow}_u <0.
   \end{align*}
This contradiction finishes the proof of the theorem.
 \end{proof}
  To apply Theorem \ref{t1} to the fractional Black-Scholes model \eqref{fBS}, it remains to show that condition \eqref{u(x)}, requiring that the indirect utility is finite, is satisfied. This is established in the following lemma by using the estimate on the fluctuations of fractional Brownian motion from Proposition \ref{ppn:fluctuations}.
  
  Fix $H \in \, (0,1),$ $\mu\in\R$, $\sigma>0$, the fractional Black-Scholes model \eqref{fBS}, that is,
$$S_t=\exp\big(\mu t+\sigma B^H_t\big),\qquad 0\leq t\leq T.$$
Let $X_t:=\log(S_t)=\mu t+\sigma B^H_t$. For $\delta > 0$, define the \emph{$\delta$-fluctuation times $(\sigma_j)_{j \geq 0}$ of $X$} inductively by $\sigma_0 \equiv 0$ and
\[
\sigma_{j + 1} (\omega) \coloneqq \inf \big\{t \geq \sigma_j~\big|~ \abs{ X_t-X_{\sigma_j}} \geq \delta\big\}
.\]
The \emph{number of $\delta$-fluctuations of $X$ up to time  $T$} is then given by the random variable
\[
\overline{F}^{(\delta)}_T (\omega) \coloneqq \sup \{j \geq 0~|~\sigma_j (\omega) \leq T\}
.\]
  \begin{lemma}\label{fu}
Fix $H \in \, (0,1),$ $\mu\in\R$, $\sigma>0$, the fractional Black-Scholes model \eqref{fBS}, that is,
$$S_t=\exp\big(\mu t+\sigma B^H_t\big),\qquad 0\leq t\leq T,$$
as well as $\lambda >0,$ and $\delta>0$ such that $(1-\lambda) e^{2\delta} < 1.$

Then, there exists a constant $K>0,$ depending only on $\delta$ and $\lambda,$ such that, for each $\varphi^0_T \in \cC(x)$, we have
\begin{align}\label{B3}
\varphi^0_T \leq x K^n \quad \mbox{on} \quad \left\{\overline{F}^{(\delta)}_T = n\right\}.
\end{align}

In particular, $\{\E[U(\varphi^0_T)]:\varphi^0_T \in \cC(x)\}$ remains bounded from above, for any concave function $U:(0,\infty)\mapsto \mathbb{R} \cup \{-\infty\}.$
\end{lemma}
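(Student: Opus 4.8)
\textbf{Proof proposal for Lemma \ref{fu}.}

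The plan is to track how much the liquidation value of an admissible strategy can grow across a single $\delta$-fluctuation of $X=\log S$, and then iterate over the at most $\overline F^{(\delta)}_T$ fluctuations that occur before time $T$. Fix $\varphi\in\cA(x)$ with terminal value $\varphi^0_T\in\cC(x)$. On an interval $\llbracket\sigma_j,\sigma_{j+1}\rrbracket$ the log-price moves by at most $\delta$ in absolute value, so $S_t\in[e^{-\delta}S_{\sigma_j},e^{\delta}S_{\sigma_j}]$ for $t$ in that interval; equivalently the whole bid-ask spread $[(1-\lambda)S_t,S_t]$ is contained in a band whose ratio of top to bottom is at most $e^{2\delta}/(1-\lambda)$, which by hypothesis $(1-\lambda)e^{2\delta}<1$ is strictly less than... wait — I need $e^{2\delta}/(1-\lambda)$, and the hypothesis gives $(1-\lambda)e^{2\delta}<1$, i.e. $e^{2\delta}<1/(1-\lambda)$, so the ratio $e^{2\delta}/(1-\lambda)<1/(1-\lambda)^2=:K$. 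The point is that within one fluctuation interval any round-trip (buy then sell, or sell then buy) loses money because of transaction costs relative to the price fluctuation, so the agent cannot increase her liquidation value by more than the factor coming from the price moving in the favourable direction combined with her current position — a bounded factor $K$ depending only on $\delta$ and $\lambda$.

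Concretely, I would argue as follows. Let $V_j:=V^{liq}_{\sigma_j}(\varphi)$. I claim $V_{j+1}\le K V_j$ with $K=\tfrac{1}{(1-\lambda)^2}$ (or a similar explicit constant). To see this, note that self-financing and admissibility on $\llbracket\sigma_j,\sigma_{j+1}\rrbracket$ mean the agent's position $(\varphi^0,\varphi^1)$ evolves under \eqref{eq:sf}, with $S_u$ confined to $[e^{-\delta}S_{\sigma_j},e^{\delta}S_{\sigma_j}]$. Starting from a position with liquidation value $V_j$ at prices near $S_{\sigma_j}$, the most the agent can be worth at $\sigma_{j+1}$ is bounded by converting everything to the asset at the cheapest possible moment and the best possible ratio — a standard estimate shows $V_{j+1}\le \frac{e^{2\delta}}{1-\lambda}V_j\le K V_j$. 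Here one has to be a little careful: the liquidation value is not monotone, and one should compare $V^{liq}$ at $\sigma_j$ and $\sigma_{j+1}$ via the supermartingale/numéraire structure, or directly estimate using that holding stock worth $w$ at price level $p$ and re-liquidating at a price level $p'\le e^{2\delta}p$ yields at most $\frac{p'}{(1-\lambda)p}w\cdot(1-\lambda)^{-1}$ after the necessary re-trade, giving the factor $K=(1-\lambda)^{-2}e^{2\delta}$, which I can bound crudely by $(1-\lambda)^{-2}$ up to relabelling, or just keep as $(1-\lambda)^{-2}e^{2\delta}$ — either way a constant depending only on $\delta,\lambda$. Iterating from $j=0$ (where $V_0=x$) up to $j=\overline F^{(\delta)}_T=n$, and using that after the last fluctuation time $\sigma_n$ the price stays in a band of ratio $e^{2\delta}$ up to $T$ so $V^{liq}_T(\varphi)\le K V_n$ as well, yields $\varphi^0_T\le V^{liq}_T(\varphi)\cdot(\text{something})\le xK^{n+1}$ on $\{\overline F^{(\delta)}_T=n\}$; absorbing the extra factor into $K$ (or $x$) gives \eqref{B3}. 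The last sentence of the lemma then follows from Jensen's inequality: for concave $U$,
\[
\E[U(\varphi^0_T)]=\sum_{n\ge 0}\E\big[U(\varphi^0_T)\mathbbm{1}_{\{\overline F^{(\delta)}_T=n\}}\big]\le \sum_{n\ge 0}\p\big[\overline F^{(\delta)}_T=n\big]\,U(xK^n)\le \E\big[U(xK^{\overline F^{(\delta)}_T})\big],
\]
and since $U$ is concave it is dominated by an affine function $U(y)\le a+by$ with $b\ge 0$, so this is at most $a+bx\,\E[K^{\overline F^{(\delta)}_T}]=a+bx\,\E[\exp((\log K)\overline F^{(\delta)}_T)]<\infty$ by Corollary \ref{cor:fluc} (note $\overline F^{(\delta)}_T$ and $F^{(\delta)}_T$ for $B^H$ differ only through the deterministic drift $\mu t$, which at worst changes $\delta$ to a comparable value, so the exponential-moment bound transfers).

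The main obstacle is the single-fluctuation estimate $V_{j+1}\le K V_j$: one must make precise, using only the self-financing inequality \eqref{eq:sf} and admissibility — not any optimality of $\varphi$ — that no trading done while $\log S$ stays within a $\delta$-window can multiply the liquidation value by more than a constant. The cleanest route is probably to fix the auxiliary constant price $\widehat S^{(j)}\equiv S_{\sigma_j}$ on $\llbracket\sigma_j,\sigma_{j+1}\rrbracket$ (which lies in the bid-ask spread up to the factor $e^{\pm\delta}$), observe that $\varphi^0+\varphi^1\widehat S^{(j)}$ is then essentially non-increasing in the frictionless shadow sense, and convert back and forth between this quantity and $V^{liq}$, each conversion costing only a factor controlled by $e^{2\delta}$ and $(1-\lambda)$. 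I expect the bookkeeping of these conversion factors to be the only genuinely delicate part; everything else is summation and an appeal to Corollary \ref{cor:fluc}.
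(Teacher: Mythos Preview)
Your overall architecture---iterate over $\delta$-fluctuation intervals, show the wealth can grow by at most a factor $K$ per interval, then combine with the exponential moments of $\overline F^{(\delta)}_T$ from Corollary~\ref{cor:fluc} and domination of $U$ by an affine function---is exactly the plan of the paper. The reduction of $\overline F^{(\delta)}_T$ to $F^{(\delta/2\sigma)}_T$ via the deterministic drift is also what the paper does.

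The genuine gap is in your single-step estimate. Your proposed route fixes a constant $\widehat S^{(j)}$ in (or near) the bid-ask spread, so that $W_t:=\varphi^0_t+\varphi^1_t\widehat S^{(j)}$ is non-increasing on $\llbracket\sigma_j,\sigma_{j+1}\rrbracket$, and then ``converts'' between $W$ and $V^{liq}$. But this conversion is \emph{not} a multiplicative factor: for $\varphi^1>0$ the difference $W_t-V^{liq}_t=\varphi^1_t\big(\widehat S^{(j)}-(1-\lambda)S_t\big)$ is proportional to $\varphi^1_tS_t$, which can be arbitrarily large compared with $V^{liq}_t$ when the agent is highly leveraged. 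Nothing in the non-increase of $W$ prevents this, and the inequality $V_{j+1}\le KV_j$ simply does not follow from your argument.

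What is missing---and what the paper supplies---is the use of admissibility $V^{liq}_u\ge 0$ at \emph{every} time $u$ in the interval to bound the leverage. The paper works with the ``optimistic value'' $V^{opt}_t=\varphi^0_t+(\varphi^1_t)^+S_t-(\varphi^1_t)^-(1-\lambda)S_t\ge V^{liq}_t$ and, pathwise, solves the deterministic problem of maximising $V^{opt}_t$ for a \emph{clairvoyant} agent over $\llbracket\sigma_j,t\rrbracket$ starting from $(V,0)$. Admissibility forces, for $\varphi^1_u\ge 0$,
\[
\varphi^0_u+\varphi^1_u(1-\lambda)e^{\delta}S_{\sigma_j}\ge \varphi^0_u+\varphi^1_u(1-\lambda)S_u\ge 0,
\]
which caps the cash spent on stock at $v=V/\big(1-(1-\lambda)e^{2\delta}\big)$; this is precisely where the hypothesis $(1-\lambda)e^{2\delta}<1$ enters and makes $K$ finite. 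The optimal clairvoyant move (buy at the trough $S=e^{-\delta}S_{\sigma_j}$, hold to $S=e^{\delta}S_{\sigma_j}$) then yields the explicit bound \eqref{B7.a}. Without this leverage bound your constant-shadow-price bookkeeping cannot close, so you should replace that paragraph by the clairvoyant-optimisation argument (or, equivalently, supplement your shadow-price inequality with an explicit admissibility-based bound on $\varphi^1_tS_t/V^{liq}_t$).
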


\begin{proof}
We first observe that the mapping $t\mapsto \mu t$ can at most have $\lfloor\frac{2\mu T}{\delta}\rfloor+1$ fluctuations of size $\frac{\delta}{2}$ up to time $T$, where $\lfloor x\rfloor$ denotes the largest integer less than or equal to $x\in\R$. Since $\delta=|X_{\sigma_{j+1}}-X_{\sigma_{j}}|\leq |\mu(\sigma_{j+1}-\sigma_{j})|+ \sigma|B^H_{\sigma_{j+1}}-B^H_{\sigma_{j}}|$, we therefore have that
$$\overline{F}^{(\delta)}_T\leq\textstyle \lfloor\frac{2\mu T}{\delta}\rfloor+1+F^{(\frac{\delta}{2\sigma})}_T,$$
where $F^{(\frac{\delta}{2\sigma})}_T$ denotes the number of $\frac{\delta}{2\sigma}$-fluctuations of $B^H$ up to time $T$ as defined in \eqref{dfluB}. Combining the previous estimate with Corollary \ref{cor:fluc} gives that the random variable $\overline{F}^{(\delta)}_T$ has exponential moments of all orders, that is,
\[
\E \brack[\big]{\exp (a \overline{F}^{(\delta)}_T)} < \infty\quad\text{for all $a\in\R$}\label{fluc:eq2}
.\]
As regards the final sentence of the lemma, it follows from \eqref{B3} and \eqref{fluc:eq2} that
$$\E[\vp^0_T]\leq \E\left[x K^{\overline{F}^{(\delta)}_T}\right]=x\E\left[\exp\left(\log(K)\overline{F}^{(\delta)}_T\right)\right]<\infty$$
and hence 
$\{\E[\varphi^0_T]:\varphi^0_T \in \cC(x)\}$ remains bounded. This implies the final assertion as any concave function $U$ is dominated by an affine function.

It remains to show \eqref{B3}. Fix an admissible trading strategy $\varphi$ starting at $\varphi_{0_-}=(1,0)$ and ending at $\varphi_T=(\varphi^0_T,0).$ Define the ``optimistic value'' process $(V^{opt}(\varphi_t ) )_{0 \leq t \leq T}$ by
\begin{align*}
V^{opt}(\varphi_t)=\varphi^0_t + (\varphi^1_t)^+ S_t - (\varphi^1_t)^- (1-\lambda)S_t.
\end{align*}

The difference to the liquidation value $V^{liq}$ as defined in \eqref{liq} is that we interchanged the roles of $S$ and $(1-\lambda)S.$ Clearly $V^{opt} \geq V^{liq}$. 

Fix a trajectory $(X_t(\omega))_{0 \leq t \leq T}$ of $X$ as well as $j \in \mathbb{N}$ such that $\sigma_j(\omega) < T.$ We claim that there is a constant $K=K(\lambda,\delta)$ such that, for every $\sigma_j(\omega) \leq t \leq \sigma_{j+1}(\omega) \wedge T,$

\begin{align}\label{B7.a}
V^{opt}(\varphi_t(\omega) ) \leq K V^{opt} ( \varphi_{\sigma_j}(\omega ) ).
\end{align}

To prove this claim we have to do some rough estimates. Fix $t$ as above. Note that $S_t$ is in the interval $[e^{-\delta} S_{\sigma_j}(\omega), e^\delta S_{\sigma_j} (\omega)]$ as $\sigma_j(\omega) \leq t \leq \sigma_{j+1} (\omega) \wedge T.$ To fix ideas suppose that $S_t(\omega)= e^\delta S_{\sigma_j} (\omega).$ We try to determine the trajectory $(\varphi_u)_{\sigma_j(\omega) \leq u \leq t}$ which maximises the value on the left hand side for given $V:=V^{opt}(\varphi_{\sigma_j}(\omega))$ on the right hand side. As we are only interested in an upper bound we may suppose that the agent is clairvoyant and knows the entire trajectory $(S_u(\omega))_{0 \leq u \leq T}.$ 

In the present case where $S_t(\omega)$ is assumed to be at the upper end of the interval $[e^{-\delta} S_{\sigma_j} (\omega), e^\delta S_{\sigma_j} (\omega)]$ the agent who is trying to maximize $V^{opt} (\varphi_t(\omega))$ wants to exploit this up-movement by investing into the stock $S$ as much as possible. But she cannot make $\varphi^1_u \in \mathbb{R}_+$ arbitrarily large as she is restricted by the admissibility condition $V^{liq}_u \geq 0$ which implies that $\varphi^0_u + \varphi^1_u(1-\lambda) S_u(\omega) \geq 0,$ for all $\sigma_j(\omega) \leq u \leq t.$ As for these $u$ we have $S_u(\omega) \leq e^\delta S_{\sigma_j}(\omega)$ this implies the inequality
\begin{align}\label{B8.}
\varphi^0_u + \varphi^1_u (1-\lambda) e^\delta S_{\sigma_j} (\omega) \geq 0, \quad \sigma_j (\omega) \leq u \leq t.
\end{align}
 
As regards the starting condition $V^{opt}(\varphi_{\sigma_j}(\omega))$ we may assume w.l.o.g.~that $\varphi_{\sigma_j} (\omega)=(V,0)$ for some number $V>0.$ Indeed, any other value of $\varphi_{\sigma_j} (\omega)=(\varphi^0_{\sigma_j} (\omega), \varphi^1_{\sigma_j} (\omega))$ with $V^{opt}(\varphi_{\sigma_j} (\omega))=V$ may be reached from $(V,0)$ by either buying stock at time $\sigma_j (\omega)$ at price $S_{\sigma_j} (\omega)$ or selling it at price $(1-\lambda) S_{\sigma_j} (\omega).$ Hence we face the elementary deterministic optimization problem of finding the trajectory $(\varphi^0_u, \varphi^1_u)_{\sigma_j (\omega) \leq u \leq t}$ starting at $\varphi_{\sigma_j} (\omega)=(V,0)$ and respecting the self-financing condition \eqref{eq:sf} as well as inequality \eqref{B8.}, such that it maximizes $V^{opt}(\varphi_t).$
Keeping in mind that $(1-\lambda) < e^{-2\delta},$ a moment's reflection reveals that the best (clairvoyant) strategy is to wait until the moment $\sigma_j(\omega) \leq \bar{t} \leq t$ when $S_{\bar{t}} (\omega)$ is minimal in the interval $[\sigma_j(\omega), t],$ then to buy at time $\bar{t}$ as much stock as is allowed by the inequality \eqref{B8.}, and then keeping the positions in bond and stock constant until time $t.$ Assuming the most favourable (limiting) case $S_{\bar{t}} (\omega) = e^{-\delta} S_{\sigma_j}(\omega),$ simple algebra gives $\varphi_u=(V,0),$ for $\sigma_j(\omega) \leq u < \bar{t}$ and 
$$\varphi_u=\Big(V-v, \frac{v e^\delta}{S_{\sigma_j}(\omega)}\Big), \quad \bar{t} \leq u \leq t,$$
where
$$v= \frac{V}{1-(1-\lambda) e^{2\delta}}.$$

Using $S_t(\omega) = e^\delta S_{\sigma_j} (\omega)$ we therefore may estimate in \eqref{B7.a}
\begin{align}\label{B10.a}
V^{opt}(\varphi_t(\omega) ) \leq V\Big[\Big(1-\frac{1}{1-(1-\lambda ) e^{2\delta}}\Big) + \frac{e^{2\delta}}{1-(1-\lambda ) e^{2\delta}}\Big].
\end{align}
Due to the hypothesis $(1-\lambda) e^{2\delta} < 1$ the term in the bracket is a finite constant $K$, depending only on $\lambda$ and $\delta.$
We have assumed a maximal up-movement $S_t(\omega)= e^\delta S_{\sigma_j} (\omega).$ The case of a maximal down-movement $S_t(\omega)=e^{-\delta} S_{\sigma_j} (\omega)$ as well as any intermediate case follow by the same token yielding again the estimate \eqref{B7.a} with the same constant given by \eqref{B10.a}. Observing that $V^{opt} \geq V^{liq}$ and $V^{liq}_T(\hvp)=\hvp^0_T$ we obtain inductively \eqref{B3} thus finishing the proof.
\end{proof}

\begin{proof}[Proof of Theorem \ref{t2}]
 The fractional Black-Scholes model satisfies $(TWC)$ by the law of iterated logarithm for fractional Brownian motion at stopping times in Theorem 1.1 of~\cite{Pey15}. The finiteness of the indirect utility function $u(x)<\infty$ follows from Lemma \ref{fu}. Therefore, the assumptions of Theorem \ref{t1} are satisfied and we obtain the existence of an optimal trading strategy $\hvp=(\hvp^0_t,\hvp^1_t)_{0_-\leq t\leq T}$ for \eqref{P1} and a shadow price $\widehat{S}=(\widehat{S}_t)_{0\leq t\leq T}$ in the sense of Definition \ref{def:sp}. 
\end{proof}

\begin{proof}[Proof of Theorem \ref{t3}]
We obtain the existence of an optimal trading strategy and a shadow price from Theorem \ref{t2}. The assertion that $\widehat{S}=(\widehat{S}_t)_{0\leq t\leq T}$ follows as in Lemma 5.1 of \cite{CS15} by combining that the dual optimiser is $\widehat{Y}=(\widehat{Y}^0_t,\widehat{Y}^1_t)_{0\leq t\leq T}$ is a local martingale by the proof of Theorem \ref{t1} with the predictable representation property of $W=(W_t)_{0\leq t\leq T}$ conditional on $\F_0$. Because of part 3) of the Definition \ref{def:sp} of a shadow price, the fact that the optimal fraction of wealth for the frictionless logarithmic utility maximisation  problem for \eqref{t3:ito} is given by 
$$\widehat{\pi}_t=\frac{\widehat{\mu}_t}{\widehat{\sigma}^2_t}, \quad 0\leq t\leq T,$$
implies the relation \eqref{t3:opt} between the coefficients $\widehat{\mu}=(\widehat{\mu}_t)_{0 \leq t \leq T}$ and $\widehat{\sigma} = (\widehat{\sigma}_t)_{0 \leq t \leq T}$ of the It\^o process \eqref{t3:ito} and the optimal trading strategy $\hvp=(\hvp^0_t,\hvp^1_t)_{0_-\leq t\leq T}$ for \eqref{P1:log}.
\end{proof}

\bibliography{SPfBm-2016-08-03}

\begin{thebibliography}{10}

\bibitem{AI05}
S.~Ankirchner and P.~Imkeller.
\newblock {Finite utility on financial markets with asymmetric information and
  structure properties of the price dynamics}.
\newblock {\em Annales de l'Institut Henri Poincar{\'e} (B) Probability and
  Statistics}, 41(3):479--503, 2005.

\bibitem{BY13}
E.~Bayraktar and X.~Yu.
\newblock {On the Market Viability under Proportional Transaction Costs}.
\newblock {\em Preprint}, 2013.

\bibitem{Ben12}
C.~Bender.
\newblock {Simple arbitrage}.
\newblock {\em Annals of Applied Probability}, 22(5):2067--2085, 2012.

\bibitem{B02}
B.~Bouchard.
\newblock Utility maximization on the real line under proportional transaction
  costs.
\newblock {\em Finance Stoch.}, 6(4):495--516, 2002.

\bibitem{BM03}
B.~Bouchard and L.~Mazliak.
\newblock A multidimensional bipolar theorem in {$L^0(\Bbb R^d;\Omega,\mathcal
  F,P)$}.
\newblock {\em Stochastic Process. Appl.}, 107(2):213--231, 2003.

\bibitem{CO11}
L.~Campi and M.~P. Owen.
\newblock Multivariate utility maximization with proportional transaction
  costs.
\newblock {\em Finance Stoch.}, 15(3):461--499, 2011.

\bibitem{C03}
P.~Cheridito.
\newblock {Arbitrage in fractional Brownian motion models}.
\newblock {\em Finance and Stochastics}, 7(4):533--553, 2003.

\bibitem{CK96}
J.~Cvitani{\'c} and I.~Karatzas.
\newblock {Hedging and portfolio optimization under transaction costs: a
  martingale approach}.
\newblock {\em Math. Fin.}, 6(2):113--165, 1996.

\bibitem{CW01}
J.~Cvitani{\'c} and H.~Wang.
\newblock On optimal terminal wealth under transaction costs.
\newblock {\em J. Math. Econom.}, 35(2):223--231, 2001.

\bibitem{CMKS14}
C.~Czichowsky, J.~Muhle-Karbe, and W.~Schachermayer.
\newblock {Transaction Costs, Shadow Prices, and Duality in Discrete Time}.
\newblock {\em SIAM Journal on Financial Mathematics}, 5(1):258--277, 2014.

\bibitem{CS15}
C.~Czichowsky and W.~Schachermayer.
\newblock {Portfolio optimisation beyond semimartingales: shadow prices and
  fractional Brownian motion}.
\newblock {\em Preprint}, 2015. {T}o appear in \emph{Annals of Applied
  Probability}.

\bibitem{CS14}
C.~Czichowsky and W.~Schachermayer.
\newblock {Duality theory for portfolio optimisation under transaction costs}.
\newblock {\em Annals of Applied Probability}, 26(3):1888--1941, 2016.

\bibitem{CSY15}
C.~Czichowsky, W.~Schachermayer, and J.~Yang.
\newblock {Shadow prices for continuous price processes}.
\newblock {\em Preprint}, 2014. {T}o appear in \emph{Mathematical Finance}.

\bibitem{DPT01}
G.~Deelstra, H.~Pham, and N.~Touzi.
\newblock {Dual formulation of the utility maximization problem under
  transaction costs}.
\newblock {\em Ann. Appl. Probab.}, 11(4):1353--1383, 2001.

\bibitem{DS94}
F.~Delbaen and W.~Schachermayer.
\newblock {A general version of the fundamental theorem of asset pricing}.
\newblock {\em Mathematische Annalen}, 300:463--520, 1994.

\bibitem{GMKS13}
S.~Gerhold, J.~Muhle-Karbe, and W.~Schachermayer.
\newblock {The dual optimizer for the growth-optimal portfolio under
  transaction costs}.
\newblock {\em Finance and Stochastics}, 17(2):325--354, 2013.

\bibitem{G02}
P.~Guasoni.
\newblock {Optimal investment with transaction costs and without
  semimartingales}.
\newblock {\em Annals of Applied Probability}, 12(4):1227--1246, 2002.

\bibitem{G06}
P.~Guasoni.
\newblock {No arbitrage under transaction costs, with fractional Brownian
  motion and beyond}.
\newblock {\em Mathematical Finance}, 16(3):569--582, 2006.

\bibitem{GRS10}
P.~Guasoni, M.~R{\'a}sonyi, and W.~Schachermayer.
\newblock {The fundamental theorem of asset pricing for continuous processes
  under small transaction costs}.
\newblock {\em Annals of Finance}, 6(2):157--191, 2010.

\bibitem{HP91}
H.~He and N.~D. Pearson.
\newblock Consumption and portfolio policies with incomplete markets and
  short-sale constraints: The infinite dimensional case.
\newblock {\em Journal of Economic Theory}, 54(2):259--304, 1991.

\bibitem{KMK11}
J.~Kallsen and J.~Muhle-Karbe.
\newblock {Existence of shadow prices in finite probability spaces}.
\newblock {\em Math. Methods Oper. Res.}, 73(2):251--262, 2011.

\bibitem{KK07}
I.~Karatzas and C.~Kardaras.
\newblock {The num{\'e}raire portfolio in semimartingale financial models}.
\newblock {\em Finance Stoch.}, 11(4):447--493, 2007.

\bibitem{KLS87}
I.~Karatzas, J.~P. Lehoczky, and S.~E. Shreve.
\newblock Optimal portfolio and consumption decisions for a ``small investor''
  on a finite horizon.
\newblock {\em SIAM journal on control and optimization}, 25(6):1557--1586,
  1987.

\bibitem{KLSX91}
I.~Karatzas, J.~P. Lehoczky, S.~E. Shreve, and G.-L. Xu.
\newblock Martingale and duality methods for utility maximization in an
  incomplete market.
\newblock {\em SIAM Journal on Control and optimization}, 29(3):702--730, 1991.

\bibitem{KP08}
C.~Kardaras and E.~Platen.
\newblock {On the semimartingale property of discounted asset-price processes}.
\newblock {\em Stochastic processes and their Applications},
  121(11):2678--2691, 2011.

\bibitem{KS99}
D.~Kramkov and W.~Schachermayer.
\newblock {The asymptotic elasticity of utility functions and optimal
  investment in incomplete markets}.
\newblock {\em Annals of Applied Probability}, 9(3):904--950, 1999.

\bibitem{LZ08}
K.~Larsen and G.~\v{Z}itkovi{\'c}.
\newblock {On the semimartingale property via bounded logarithmic utility}.
\newblock {\em Annals of Finance}, 4(2):255--268, 2008.

\bibitem{M63}
B.~Mandelbrot.
\newblock {The variation of some other speculative prices}.
\newblock {\em Journal of Business}, pages 393--413, 1967.

\bibitem{Pey15}
R.~Peyre.
\newblock {No simple arbitrage for fractional Brownian motion}.
\newblock Preprint, 2015. {T}o appear in \emph{Bernoulli}.

\bibitem{R97}
L.~C.~G. Rogers.
\newblock {Arbitrage with fractional Brownian motion}.
\newblock {\em Mathematical Finance}, 7(1):95--105, 1997.

\end{thebibliography}
\bibliographystyle{abbrv}

\end{document}